\newcommand{\bra}[1]{{\left\langle{#1}\right\vert}}
\newcommand{\ket}[1]{{\left\vert{#1}\right\rangle}}
\newcommand{\qw}[1][-1]{\ar @{-} [0,#1]}
\newcommand{\qwx}[1][-1]{\ar @{-} [#1,0]}
\newcommand{\cw}[1][-1]{\ar @{=} [0,#1]}
\newcommand{\gate}[1]{*+<.6em>{#1} \POS ="i","i"+UR;"i"+UL **\dir{-};"i"+DL **\dir{-};"i"+DR **\dir{-};"i"+UR **\dir{-},"i" \qw}
\newcommand{\measureD}[1]{*{\xy*+=<0em,.1em>{#1}="e";"e"+UR+<0em,.25em>;"e"+UL+<-.5em,.25em> **\dir{-};"e"+DL+<-.5em,-.25em> **\dir{-};"e"+DR+<0em,-.25em> **\dir{-};{"e"+UR+<0em,.25em>\ellipse^{}};"e"+C:,+(0,1)*{} \endxy} \qw}
\newcommand{\control}{*!<0em,.025em>-=-<.2em>{\bullet}}
\newcommand{\ctrl}[1]{\control \qwx[#1] \qw}
\newcommand{\multigate}[2]{*+<1em,.9em>{\hphantom{#2}} \POS [0,0]="i",[0,0].[#1,0]="e",!C *{#2},"e"+UR;"e"+UL **\dir{-};"e"+DL **\dir{-};"e"+DR **\dir{-};"e"+UR **\dir{-},"i" \qw}
\newcommand{\ghost}[1]{*+<1em,.9em>{\hphantom{#1}} \qw}
\newcommand{\rstick}[1]{*!L!<-.5em,0em>=<0em>{#1}}
\newcommand{\lstick}[1]{*!R!<.5em,0em>=<0em>{#1}}
\newcommand{\Qcircuit}{\xymatrix @*=<0em>}
\newtheorem{thm}{Theorem}[section]
\newtheorem{cor}[thm]{Corollary}
\newtheorem{lemma}[thm]{Lemma}
\newtheorem{prop}[thm]{Proposition}
\theoremstyle{definition}
\newtheorem{defn}[thm]{Definition}
\newtheorem{remark}[thm]{Remark}
\newcommand{\bb}[1]{\mathbb{#1}}
\newcommand{\cl}[1]{\mathcal{#1}}
\renewcommand{\ket}[1]{|#1\rangle}
\renewcommand{\bra}[1]{\langle#1|}
\newcommand{\HH}{\mathcal{H}}
\newcommand{\RR}{\mathcal{R}}
\newcommand{\NN}{\mathbb{N}_0}
\newcommand{\ZZ}{\mathbb{Z}}
\newcommand{\sqq}{{\textstyle{\frac{1}{\sqrt 2}}}}
\newcommand{\Tr}{\mbox{Tr}}
\newcommand{\nc}{\operatorname{nc}}
\begin{document}

\title[Perfect Embezzlement of Entanglement]{Perfect Embezzlement of Entanglement}
\author[R.~Cleve]{Richard Cleve}
\address[R.~Cleve]{\newline Institute for Quantum Computing and School of Computer Science, University of Waterloo; Canadian Institute for Advanced Research}
\email{cleve@uwaterloo.ca}
\author[L.~Liu]{Li Liu}
\address[L.~Liu]{\newline Institute for Quantum Computing and School of Computer Science, University of Waterloo}
\email{l47liu@uwaterloo.ca}
\author[V.~I.~Paulsen]{Vern I.~Paulsen}
\address[V.~I.~Paulsen]{\newline Institute for Quantum Computing and Department of Pure Mathematics, University of Waterloo}
\email{vpaulsen@uwaterloo.ca}

\thanks{}

\begin{abstract}
Van Dam and Hayden introduced a concept commonly referred to as \emph{embezzlement}, where, for any entangled quantum state $\phi$, there is an entangled catalyst state $\psi$, from which a high fidelity approximation of $\phi \otimes \psi$ can be produced using only local operations.
We investigate a version of this where the embezzlement is \emph{perfect} (i.e., the fidelity is~1).
We prove that perfect embezzlement is impossible in a tensor product framework, even with infinite-dimensional Hilbert spaces and infinite entanglement entropy.
Then we prove that perfect embezzlement \emph{is} possible in a commuting operator framework.
We prove this using the theory of C*-algebras and we also provide an explicit construction.
Next, we apply our results to analyze perfect versions of a nonlocal game introduced by Regev and Vidick.
Finally, we analyze the structure of perfect embezzlement protocols in the commuting operator model, showing that they require infinite-dimensional Hilbert spaces.
\end{abstract}

\maketitle


\section{Introduction}

It is well known that an entangled quantum state cannot be produced by local operations alone.
Van Dam and Hayden~\cite{vanDamH03} proposed a method that, in a certain sense, appears to produce \textit{additional} entanglement by local operations.
They showed that, for any entangled state $\phi$ and $\epsilon > 0$, starting with a special entangled \textit{catalyst} state $\psi$, applying local operations, can produce a state that approximates $\phi \otimes \psi$ within fidelity $1-\epsilon$.
Although the entanglement entropy of the state produced cannot exceed that of $\psi$, when $\epsilon$ is small, it is difficult to distinguish between the state produced and $\phi \otimes \psi$.
The name \textit{embezzlement} reflects the fact that the protocol ``steals" entanglement from $\psi$ in order to produce entanglement elsewhere, but in a manner that is difficult to detect.

In the method of~\cite{vanDamH03}, fidelity $1-\epsilon$ can be attained for any $\epsilon > 0$, using a catalyst $\psi$ with entanglement entropy $O(\log(1/\epsilon))$.
Moreover, it is shown in~\cite{vanDamH03} that the entanglement entropy of the catalyst must be $\Omega(\log(1/\epsilon))$ to attain this fidelity.
Thus, high fidelity embezzlement requires a large amount of entanglement to begin with.

We consider the question: what kinds of embezzlement are possible when the amount of entanglement in $\psi$ is allowed to be \textit{infinite}?
The aforementioned results do not rule out perfect (i.e., fidelity 1) embezzlement in such cases.
On the other hand, the catalytic states $\psi_{\epsilon}$ in~\cite{vanDamH03} do not converge to a valid quantum vector state as $\epsilon$ approaches 0.
This question provides a setting in which the consequences of notions of infinite entanglement can be explored.

We first show that in the \textit{tensor product framework}, where catalytic states are in the tensor product of two Hilbert spaces, perfect embezzlement is impossible, even if the spaces are infinite dimensional and the entanglement entropy is infinite.

Next, we consider a \textit{commuting operator framework}, where the notion of ``local" is formalized differently: there is one joint Hilbert space, accessible to both Alice and Bob; however, the operations that Alice performs on this space must commute with those of Bob.
This formalism is used in quantum field theory 
(see~\cite{Tsirelson1993,Navascues2008,scholz2008,DohertyLTW2008,junge2011,Fritz2012} for more discussion about this framework and its relationship with the tensor product framework).

A natural adaptation of the commuting operator framework to the setting of embezzlement is the following.
The catalytic state $\psi$ is in a jointly accessibe Hilbert space, that we refer to as the \textit{resource space}~$\mathcal R$.
There are also two additional Hilbert spaces: $\cl H_A$, accessible to Alice only; and $\cl H_B$, accessible to Bob only.
The goal of the protocol is to transform a product state to an entangled state in $\cl H_A \otimes \cl H_B$ while using $\psi$ catalytically, and using operators that are commuting in the following sense.
Alice can apply a unitary operator on $\HH_A \otimes \RR$ and Bob can apply a unitary operator on $\HH_B \otimes \RR$; however, $U_A \otimes I_{\HH_B}$ and $I_{\HH_A} \otimes U_B$ must commute on $\HH_A \otimes \RR \otimes \HH_B$, as illustrated in Figure~\ref{fig:commuting-structure}.
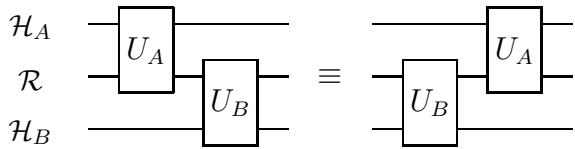
\begin{figure}[!ht]
\begin{center}
\setlength{\unitlength}{0.14mm}
\begin{picture}(55,140)(0,0)
\linethickness{0.5pt}
\put(0,10){\makebox(20,14){\small $\HH_B$}}
\put(0,60){\makebox(20,14){\small $\mathcal R$}}
\put(0,110){\makebox(20,14){\small $\HH_A$}}
\end{picture}
\begin{picture}(190,140)(0,0)
\linethickness{0.5pt}
\put(0,120){\line(1,0){30}}
\put(80,120){\line(1,0){110}}
\put(0,70){\line(1,0){30}}
\put(0,20){\line(1,0){110}}
\put(160,20){\line(1,0){30}}
\put(110,5){\framebox(50,80){$U_B$}}
\put(80,70){\line(1,0){30}}
\put(30,55){\framebox(50,80){$U_A$}}
\put(160,70){\line(1,0){30}}
\end{picture}
\begin{picture}(60,140)(0,0)
\put(0,60){\makebox(60,20){$\equiv$}}
\end{picture}
\begin{picture}(190,140)(0,0)
\linethickness{0.5pt}
\put(0,20){\line(1,0){30}}
\put(80,20){\line(1,0){110}}
\put(0,70){\line(1,0){30}}
\put(0,120){\line(1,0){110}}
\put(160,120){\line(1,0){30}}
\put(30,5){\framebox(50,80){$U_B$}}
\put(80,70){\line(1,0){30}}
\put(110,55){\framebox(50,80){$U_A$}}
\put(160,70){\line(1,0){30}}
\end{picture}
\end{center}
\caption{\small Commuting operator framework 
as a circuit diagram.}
\label{fig:commuting-structure}
\end{figure}

We focus on the problem of embezzling a Bell state of the form $\sqq\ket{0}\otimes\ket{0}+\sqq\ket{1}\otimes\ket{1}$ (though our methodology adapts to more general states).
In this case, $\cl H_A = \cl H_B = \bb C^{2}$.
A {\it perfect embezzlement protocol} consists of 
a resource space $\RR$, 
a catalytic state $\psi \in \cl \RR$, and
commuting unitary operators $U_A$ and $U_B$, such that 
\begin{align}\label{eq:perfect-embezzlement}
(U_A \otimes I_{\HH_B})&(I_{\HH_A} \otimes U_B) \ket{0} \otimes \psi \otimes \ket{0} \\
&= \textstyle{\frac{1}{\sqrt{2}}}\ket{0} \otimes \psi \otimes \ket{0} + 
\textstyle{\frac{1}{\sqrt{2}}}\ket{1} \otimes \psi \otimes \ket{1}. \nonumber
\end{align}

We show that, in this commuting operator framework, a perfect embezzlement protocol exists, where the resource space is a countably infinite dimensional (i.e., separable) Hilbert space.
We show this in two ways: one is a simple existence proof, based on the theory of C*-algebras, which does not yield explicit unitary operations; the other is by an explicit construction.

Next, we consider \textit{coherent embezzlement}, which was introduced in~\cite{RV2013} (where it is referred to as $T_2$) and is a refinement of coherent state exchange, introduced in~\cite{LeungTW13}.
Coherent embezzlement is related to embezzlement but has the property that it is operationally testable in a sense similar to that of nonlocal games (whereas embezzlement itself does not have this property).
We give reductions between perfect embezzlement and perfect coherent embezzlement to prove that perfect coherent embezzlement is impossible in the tensor product framework; whereas it is possible in the commuting operator framework.

Finally, we prove a theorem concerning the structure of pairs of unitaries that achieve perfect embezzlement in terms of properties of their {\it constituent operators}. We show that at least one of these operators must contain a {\it non-unitary isometry}, a term that we will define later. Since non-unitary isometries do not exist in finite dimensions, this implies that perfect embezzlement in the commuting-operator model cannot be achieved with a finite dimensional resource space.


\section{Perfect embezzlement is impossible in a tensor product framework}

In~\cite{vanDamH03}, it is proved that, for any protocol that embezzles within fidelity $1 - \epsilon$, the entanglement entropy of the catalyst must be $\Omega(\log(1/\epsilon))$.
It follows that perfect embezzlement is impossible with finite-dimensional entanglement in the tensor product framework.
Here, we extend this impossibility result to tensor products of arbitrary Hilbert spaces (where the dimension of the spaces and entanglement entropy can be infinite).

In the tensor product framework, the resource space is of the form $\mathcal{R} = \mathcal{R}_A \otimes \mathcal{R}_B$, where $\mathcal{R}_A$ and $\mathcal{R}_B$ are arbitrary Hilbert spaces.
Alice has access to $\mathcal{H}_A \otimes \mathcal{R}_A$ and Bob has access to $\mathcal{H}_B \otimes \mathcal{R}_B$.
Alice and Bob can each apply any unitary operation to the registers that they have access to, as illustrated in Figure~\ref{fig:tensor} (left), where the input state is $\ket{0}\otimes \psi \otimes \ket{0}$, for some state $\psi \in \mathcal{R}_A \otimes \mathcal{R}_B$.
The protocol performs perfect embezzlement if its output state is
$\textstyle{\frac{1}{\sqrt{2}}}\ket{0} \otimes \psi \otimes \ket{0} + 
\textstyle{\frac{1}{\sqrt{2}}}\ket{1} \otimes \psi \otimes \ket{1}$.

We also define a potentially stronger model, that we refer to as \textit{embezzlement with ancillas}, which includes the possibility of Alice and Bob employing additional registers as part of their protocol, as illustrated in Figure~\ref{fig:tensor} (right).
\begin{figure}[!ht]
\centerline{
\Qcircuit @C=1em @R=.7em{
\\
\lstick{\mathcal{H}_A \ \ \ } & \multigate{1}{U_A}      & \qw \\
\lstick{\mathcal{R}_A \ \ \ } & \ghost{U_{A}}      & \qw \\
\lstick{\mathcal{R}_B \ \ \ } & \multigate{1}{U_B} & \qw \\
\lstick{\mathcal{H}_B \ \ \ } & \ghost{U_{B}}      & \qw 
}
\hspace*{40mm}
\Qcircuit @C=1em @R=.7em{
\lstick{\mathcal{G}_A \ \ \ } & \multigate{2}{U_A} & \qw \\
\lstick{\mathcal{H}_A \ \ \ } & \ghost{U_{A}}      & \qw \\
\lstick{\mathcal{R}_A \ \ \ } & \ghost{U_{A}}      & \qw \\
\lstick{\mathcal{R}_B \ \ \ } & \multigate{2}{U_B} & \qw \\
\lstick{\mathcal{H}_B \ \ \ } & \ghost{U_{B}}      & \qw \\
\lstick{\mathcal{G}_B \ \ \ } & \ghost{U_{B}}      & \qw 
}
}
\caption{\small Circuit diagram for embezzlement (left) and embezzlement with ancillas (right) in the tensor product framework.
Registers $\mathcal{R}_A$ and $\mathcal{R}_B$ contain a bipartite resource state that must be used catalytically.
Registers $\cl H_A = \cl H_B = \bb C^{2}$ are intialized to state $\ket{00}$ and are transformed to state $\sqq\ket{00}+\sqq\ket{11}$.
Registers $\mathcal{G}_A$ and $\mathcal{G}_B$ are ancillas, whose initial state is unentangled, but they need not be used catalytically.
}\label{fig:tensor}
\end{figure}
The input to the circuit is of the form $\gamma_A \otimes \ket{0} \otimes \psi \otimes \ket{0} \otimes \gamma_B$, where $\psi \in \mathcal{R}_A \otimes \mathcal{R}_B$ is the catalyst state, and $\gamma_A \in \mathcal{G}_A$ and $\gamma_B \in \mathcal{G}_B$ are the initial states of Alice and Bob's respective ancilla registers, $\mathcal{G}_A$ and $\mathcal{G}_B$ (which can be infinite dimensional).
If we express the Hilbert space as 
$(\mathcal{H}_A \otimes \mathcal{H}_B) \otimes (\mathcal{R}_A \otimes \mathcal{R}_B) \otimes (\mathcal{G}_A \otimes \mathcal{G}_B)$
then the input state can be written as 
$\ket{00}\otimes\psi\otimes (\gamma_A \otimes \gamma_B)$.
The protocol \textit{performs perfect embezzlement} if and only if the output state is of the form 
\begin{align}
\bigl(\sqq\ket{00}+\sqq\ket{11}\bigr)\otimes\psi\otimes\gamma_{AB},
\end{align}
for some state $\gamma_{AB} \in \mathcal{G}_A \otimes \mathcal{G}_B$.

\begin{thm}\label{thm:no-spatial-embezzlement}
Perfect embezzlement is impossible in the tensor product framework, even if Alice and Bob are allowed to use ancillas.
\end{thm}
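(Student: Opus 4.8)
The plan is to track a single local-unitary invariant across the Alice--Bob cut---namely the operator norm of the reduced density matrix---and to show that the required output state has a strictly smaller value of this invariant than the input, which is impossible since the protocol acts by a product of local unitaries.

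First I would fix the bipartition into Alice's side $\mathcal G_A \otimes \HH_A \otimes \RR_A$ and Bob's side $\RR_B \otimes \HH_B \otimes \mathcal G_B$. Because $U_A$ acts only on Alice's three factors and $U_B$ only on Bob's, the overall operation is a product $U_A \otimes U_B$ of a unitary on Alice's side with a unitary on Bob's side. Taking the partial trace over Bob's side and using cyclicity of the trace in that factor shows that the reduced density matrix transforms merely by conjugation, $\rho_A^{\mathrm{out}} = U_A\,\rho_A^{\mathrm{in}}\,U_A^*$; in particular its spectrum, and hence its operator norm $\|\cdot\|_\infty$, is unchanged.

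Next I would compute both norms. For the input state $\gamma_A \otimes \ket 0 \otimes \psi \otimes \ket 0 \otimes \gamma_B$, only $\psi$ straddles the cut, so $\rho_A^{\mathrm{in}} = \ket{\gamma_A}\bra{\gamma_A} \otimes \ket 0\bra 0 \otimes \rho^{\psi}_{\RR_A}$ and $\|\rho_A^{\mathrm{in}}\|_\infty = \lambda_{\max}$, the largest eigenvalue of the catalyst's reduced density matrix $\rho^{\psi}_{\RR_A}$ (equivalently, the square of the largest Schmidt coefficient of $\psi$). For the target output $\bigl(\sqq\ket{00}+\sqq\ket{11}\bigr)\otimes\psi\otimes\gamma_{AB}$, the three tensor factors live over disjoint pairs of subsystems, so the reduced state factorizes as $\rho_A^{\mathrm{out}} = \tfrac12 I_2 \otimes \rho^{\psi}_{\RR_A} \otimes \rho^{\gamma}_{\mathcal G_A}$, whose operator norm is multiplicative: $\|\rho_A^{\mathrm{out}}\|_\infty = \tfrac12\,\lambda_{\max}\,\mu_{\max}$, where $\mu_{\max}\le 1$ is the largest eigenvalue of $\rho^{\gamma}_{\mathcal G_A}$. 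Equating the two norms gives $\tfrac12\lambda_{\max}\mu_{\max} = \lambda_{\max}$; since $\psi$ is a unit vector we have $\lambda_{\max}>0$, forcing $\mu_{\max}=2$, which contradicts $\mu_{\max}\le 1$. Note that the ancillas enter only through the factor $\mu_{\max}\le 1$, which can only shrink the output norm, so ancillas cannot help---exactly as the theorem asserts.

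The main obstacle is ensuring the argument survives in infinite dimensions, where the Schmidt spectrum of $\psi$ may be an infinite sequence accumulating at $0$ and where eigenvalue lists are awkward to compare. The resolution is to work entirely with the operator norm rather than with individual Schmidt coefficients: every density operator is trace-class, hence compact, so its operator norm equals its largest eigenvalue and is genuinely attained; and $\|A\otimes B\|_\infty = \|A\|_\infty\|B\|_\infty$ holds for bounded operators on arbitrary Hilbert spaces. With these two facts the computation above is valid regardless of the (possibly infinite) dimensions of $\RR_A,\RR_B,\mathcal G_A,\mathcal G_B$ and regardless of the entanglement entropy of $\psi$.
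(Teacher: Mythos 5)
Your proof is correct, and it is a genuine (if close) variant of the paper's argument rather than a copy of it. The paper argues directly with Schmidt coefficients: it invokes the infinite-dimensional Schmidt decomposition (existence and uniqueness of the coefficients, proved in its Appendix~A), observes that the local unitary $U_A \otimes U_B$ leaves the Schmidt coefficients across the Alice--Bob cut unchanged, and obtains a contradiction because the input's largest Schmidt coefficient is $\alpha_0 > 0$ while the output's is at most $\sqq\alpha_0$. You track the same quantity in its density-operator incarnation --- $\|\rho_A\|_\infty$ is exactly the square of the largest Schmidt coefficient --- but your supporting lemmas are different: conjugation-covariance of the partial trace under local unitaries, compactness of trace-class operators, and multiplicativity of the operator norm under tensor products. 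The trade-off is that your route never needs the Schmidt decomposition theorem at all (the paper devotes an appendix to it precisely because the infinite-dimensional statement is not standard in the QIT literature), at the cost of having to know that the partial trace and its covariance property $\Tr_B\bigl[(U_A\otimes U_B)\rho(U_A\otimes U_B)^*\bigr] = U_A\,\Tr_B[\rho]\,U_A^*$ are well defined for trace-class operators on arbitrary Hilbert spaces; the paper's route concentrates all the infinite-dimensional work into the Schmidt decomposition, which is of independent interest. One small remark: your final step does not actually need eigenvalues to be attained --- since $\|\rho^{\gamma}_{\mathcal G_A}\|_\infty \le \Tr(\rho^{\gamma}_{\mathcal G_A}) = 1$, one gets $\|\rho_A^{\mathrm{out}}\|_\infty \le \frac{1}{2}\|\rho^{\psi}_{\RR_A}\|_\infty < \|\rho^{\psi}_{\RR_A}\|_\infty = \|\rho_A^{\mathrm{in}}\|_\infty$ directly from norm multiplicativity and $\|\rho^{\psi}_{\RR_A}\|_\infty > 0$, so the appeal to compactness is a convenience rather than a necessity.
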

\begin{proof}
The proof is a straightforward application of the Schmidt decomposition for vectors in tensor products of arbitrary Hilbert spaces.
For arbitrary (not necessarily separable) 
 Hilbert spaces $\mathcal{H}$ and $\mathcal{K}$ and any $\phi \in \mathcal{H} \otimes \mathcal{K}$, it is possible to express
\begin{align}
\phi = \sum_{j = 0}^{\infty} \alpha_j\, u_j\otimes v_j, 
\end{align}
where $\alpha_j \ge 0$, $\sum_{j = 0}^{\infty} |\alpha_j|^2 = 1$, $ \alpha_j \ge \alpha_{j+1}$, $u_0, u_1, \dots $ are orthonormal vectors in $\mathcal{H}$, and $v_0, v_1, \dots$ are orthonormal vectors in $\mathcal{K}$. Moreover, given these conditions, the coefficients $\alpha_j$ are unique.
For the convenience of the reader, we include a proof of this in Appendix~\ref{appendix:Schmidt}.

Now taking a Schmidt decomposition of
$\gamma_A \otimes \ket{0} \otimes \psi \otimes \ket{0} \otimes \gamma_B$, with respect to 
$\big(\mathcal{G}_A \otimes \mathcal{H}_A \otimes \mathcal{R}_A \big) \otimes \big(\mathcal{G}_B \otimes \mathcal{H}_B \otimes \mathcal{R}_B \big)$, we obtain Schmidt coefficients $\alpha_j$.

Suppose that a perfect embezzlement protocol exists.
Then, since $U_A$ and $U_B$ are local unitaries, the Schmidt coefficients of the initial state $\ket{00} \otimes \psi \otimes \gamma_A \otimes \gamma_B$ must be the same as those of the final state $(\sqq\ket{00} + \sqq\ket{11})\otimes \psi \otimes \gamma_{AB}$.
But this is a contradiction, since the largest Schmidt coefficient of the input state is $\alpha_0$ (which is nonzero) and the largest Schmidt coefficient of the output state is at most $\sqq\alpha_0$.
Therefore, there is no perfect embezzlement protocol in the tensor product framework.
\end{proof}


\section{Perfect embezzlement is possible in a commuting operator framework}\label{approxtoperfect}

In this section we show that, since one can approximately embezzle a Bell state to any level of precision in finite dimensions (by the results of~\cite{vanDamH03}), one can perfectly embezzle in infinite
dimensions in the commuting operator framework.  
Readers unfamiliar with the theory of C*-algebras might prefer to read our primer on C*-algebras in Appendix~\ref{appendix:cstarprimer} before tackling this section.
At the end of the section, we explain how to generalize the technique to more general entangled states.

We begin by showing that each commuting operator framework, where $\cl H_A = \cl H_B = \bb C^2$, yields a set of eight operators on the resource space.  To study the most general commuting framework, it is natural to consider the relations that any such set of operators must satisfy and look for a ``universal" model for such sets of operators.

We will show that the eight operators arising from a commuting operator framework are always a representation of a certain C*-algebra and that the catalyst vector yields a state on this C*-algebra. We will show that the commuting operator framework together with the catalyst vector achieves perfect embezzlement of a Bell state if and only if the state on this C*-algebra induced by the catalyst vector satisfies a set of four equations.

In this manner the question of whether or not one can perfectly embezzle a Bell state is reduced to a question about the existence of a state on this C*-algebra that satisfies our four equations.

Finally, we show that perfect embezzlement of a Bell state is possible in the commuting operator framework by showing the existence of such a state. 

The ``universal"  C*-algebra that one needs was first introduced by L.G. Brown~\cite{Br}, who referred to it as the {\it universal C*-algebra of a non-commutative unitary} for reasons that will, hopefully, be clear.  Our viewpoint shows that in a certain sense questions about embezzlement can be interpreted as questions about states on these particular quantum group C*-algebras. We think that this perspective is new and should lead to interesting links between these two areas.

Let's return to the scenario of Figure~\ref{fig:commuting-structure}. Alice's unitary operation, $U_A: \bb C^2 \otimes \cl R \to \bb C^2 \otimes \cl R$ can be represented by a $2 \times 2$ matrix of operators on $\cl R$, 
\[U_A= \begin{pmatrix} U_{00} & U_{01} \\ U_{10} & U_{11} \end{pmatrix}= \sum_{i,j=0}^1 \ket{i} \bra{j} \otimes U_{ij} \]
 where $U_A( \ket{j} \otimes h) = \sum_{i=0}^1 \ket{i} \otimes U_{ij} h$. In this case, 
\[ U_A^* = \begin{pmatrix} U_{00}^* & U_{10}^* \\ U_{01}^* & U_{11}^* \end{pmatrix} \]
and the fact that $U_A$ is unitary can be expressed by eight equations involving these operators that are best expressed as
\begin{align}\label{eq:8-eqns} 
U_A^*U_A = \begin{pmatrix} I_{\cl R} & 0\\ 0 & I_{\cl R} \end{pmatrix} = U_AU_A^*,
\end{align}
where we apply the usual rules of matrix multiplication, being careful to remember that since the entries of $U_A$ are operators, not numbers, they need not commute. We also recall that when $\cl R$ is infinite dimensional, then it is necessary that both $U_A^*U_A$ and $U_AU_A^*$ be the identity to guarantee that $U_A$ is unitary.

Finally, in the special case that $\dim(\cl R) =1$ so that these entries are numbers, then we are back to the usual case of a $2 \times 2$ complex unitary matrix.

Conversely, if we let $U_A$ be any $2 \times 2$ matrix of operators on $\cl R$ that satisfies Eq.~\eqref{eq:8-eqns} then $U_A$ will define a unitary on $\bb C^2 \otimes \cl R$.

Similarly, Bob's unitary $U_B: \cl R \otimes \bb C^2 \to \cl R \otimes \bb C^2$ is represented by a $2 \times 2$ matrix of operators on $\cl R$, $U_B= ( V_{ij})$, whose entries satisfy the same eight equations.

Finally, to have a commuting operator framework as in Figure~\ref{fig:commuting-structure}, we need $(U_A \otimes I_2)(I_2 \otimes U_B) = (I_2 \otimes U_B)(U_A \otimes I_2)$. The following proposition translates this condition into equations involving the operator entries.

\begin{prop} Let $U_{ij}, V_{kl}, 0 \le i,j,k,l \le 1$ be operators on the Hilbert space $\cl R$ such that $U_A=( U_{ij}),$ and $U_B= (V_{kl})$ are unitaries.  Then $(U_A \otimes I_2)(I_2 \otimes U_B)
= (I_2 \otimes U_B)(I_2 \otimes U_A)$ if and only if $U_{ij}V_{kl} = V_{kl}U_{ij}$ and $U_{ij}^*V_{kl} = V_{kl}U_{ij}^*$ for all $i,j,k,l$.
\end{prop}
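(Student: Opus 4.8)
The plan is to expand both operators as $2\times 2$ operator matrices acting on the three-fold tensor product $\bb C^2 \otimes \RR \otimes \bb C^2$, multiply them out blockwise, and read off the commutation relations by matching coefficients.

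First I would write, on $\bb C^2 \otimes \RR \otimes \bb C^2$,
\[
U_A \otimes I_2 = \sum_{i,j=0}^1 \ket{i}\bra{j} \otimes U_{ij} \otimes I_2, \qquad
I_2 \otimes U_B = \sum_{k,l=0}^1 I_2 \otimes V_{kl} \otimes \ket{k}\bra{l},
\]
where the first and third tensor legs are the two copies of $\bb C^2$ and the middle leg is $\RR$. The only factors that can fail to commute are the $\RR$-entries $U_{ij}$ and $V_{kl}$, since $\ket{i}\bra{j}$ and $\ket{k}\bra{l}$ act on different copies of $\bb C^2$. Multiplying gives
\[
(U_A \otimes I_2)(I_2 \otimes U_B) = \sum_{i,j,k,l} \ket{i}\bra{j}\otimes U_{ij}V_{kl} \otimes \ket{k}\bra{l},
\]
and similarly $(I_2 \otimes U_B)(U_A \otimes I_2) = \sum_{i,j,k,l} \ket{i}\bra{j}\otimes V_{kl}U_{ij} \otimes \ket{k}\bra{l}$. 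Applying both sides to a vector $\ket{j}\otimes h \otimes \ket{l}$ and projecting onto $\ket{i}\otimes\RR\otimes\ket{k}$ (legitimate because $\{\ket{i}\bra{j}\}$ is a basis of $M_2$), the two products agree if and only if $U_{ij}V_{kl} = V_{kl}U_{ij}$ for every $i,j,k,l$. This already settles the ``if'' direction, since the hypothesized relations (in particular the first family) make the two block sums coincide.

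It remains, for the ``only if'' direction, to recover the adjoint relations $U_{ij}^*V_{kl} = V_{kl}U_{ij}^*$. These do not come out of the naive blockwise comparison above, and this is the one genuinely non-routine point: I would use that $U_A\otimes I_2$ and $I_2 \otimes U_B$ are \emph{unitary}. For commuting unitaries $A,B$, multiplying $AB=BA$ on both sides by $B^*=B^{-1}$ yields $AB^* = B^*A$; hence $A := U_A \otimes I_2$ also commutes with $(I_2\otimes U_B)^* = I_2 \otimes U_B^*$. Expanding $I_2 \otimes U_B^* = \sum_{k,l} I_2 \otimes V_{lk}^* \otimes \ket{k}\bra{l}$ and repeating the blockwise match gives $U_{ij}V_{lk}^* = V_{lk}^* U_{ij}$; relabeling indices and taking adjoints produces $U_{ij}^* V_{kl} = V_{kl}U_{ij}^*$, as required. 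I expect the only subtlety to be keeping the tensor-leg bookkeeping and the adjoint-induced index transpose ($V_{kl}$ versus $V_{lk}^*$) straight; everything else is routine multiplication of operator-valued matrix entries.
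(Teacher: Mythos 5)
Your proposal is correct and follows essentially the same route as the paper: expand both products blockwise on $\mathbb{C}^2 \otimes \mathcal{R} \otimes \mathbb{C}^2$ to see that equality of the products is exactly the relations $U_{ij}V_{kl} = V_{kl}U_{ij}$, then obtain the adjoint relations from unitarity via the fact that an invertible operator commuting with another operator has a commuting inverse. The only (immaterial) difference is that you pass to the inverse of $I_2 \otimes U_B$ while the paper passes to $(U_A \otimes I_2)^{-1} = U_A^* \otimes I_2$; your index bookkeeping ($V_{lk}^*$ versus $V_{kl}$) is handled correctly.
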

\begin{proof} We have that 
\[(U_A \otimes I_2)(I_2 \otimes U_B)( \ket{j} \otimes h \otimes \ket{l} ) = \sum_{i,k=0}^1 \ket{i} \otimes U_{ij}V_{kl}h \otimes \ket{k},\]
and similarly,
\[(I_2 \otimes U_B)(U_A \otimes I_2) ( \ket{j} \otimes h \otimes \ket{l}) = \sum_{i,k=0}^1 \ket{i} \otimes V_{kl}U_{ij}h \otimes \ket{k}.\] 
Thus, we see that $(U_A \otimes I_2)(I_2 \otimes U_B) = (I_2 \otimes U_B)(U_A \otimes I_2)$ is equivalent to $U_{ij}V_{kl} = V_{kl}U_{ij}$, for all $i,j,k,l.$

However, if an invertible operator commutes with another operator, then its inverse also commutes with that operator. Hence, $(U_A \otimes I_2)^{-1} = (U_A^* \otimes I_2)$ commutes with $(I_2 \otimes U_B)$ and this is equivalent to  $U_{ij}^*V_{kl}= V_{kl}U_{ij}^*$, for all $i,j,k,l.$
\end{proof}

The above equations are generally summarized by saying that the set of operators $\{ U_{ij} \}$ {\it *-commutes} with the set $\{ V_{kl} \}$.
Thus, having a commuting operator framework is equivalent to having two unitaries $U_A=(U_{ij})$, and $U_B=(V_{kl})$ whose entries *-commute.

We wish to study ``universal" properties of $2 \times 2$ matrices of operators $(U_{ij})$ that give rise to a unitary. To do this we begin with a unital *-algebra $\cl U_2$ with generators, denoted $1$ and  $u_{ij}, \, 0 \le i,j \le 1$, subject to the eight equations, 
\begin{align} 
\begin{pmatrix} u_{00} & u_{01} \\ u_{10} & u_{11} \end{pmatrix} 
\begin{pmatrix} u_{00}^* & u_{10}^* \\ u_{01}^* & u_{11}^* \end{pmatrix} 
=\begin{pmatrix} u_{00}^* & u_{10}^* \\ u_{01}^* & u_{11}^* \end{pmatrix} 
\begin{pmatrix} u_{00} & u_{01} \\ u_{10} & u_{11} \end{pmatrix} 
= \begin{pmatrix} 1 & 0 \\ 0 & 1 \end{pmatrix}.
\end{align}
 Thus, whenever there is a Hilbert space $\cl H$ and four operators, $U_{ij}$ on that space such that the operator-matrix $U=(U_{ij})$ defines a unitary operator on $\cl H \otimes \bb C^2$, then there is a *-homomorphism,
\[ \pi: \cl U_2 \to B(\cl H) \text{ with } \pi(u_{ij}) = U_{ij}.\]
For $x \in \cl U_2$ he sets $\|x\|= \sup \{ \|\pi(x)\| \colon \pi \text{ a *-homomorphism} \}$, where the supremum is taken over all Hilbert spaces and all $\pi$'s as above. This defines a norm on $\cl U_2$ and that the completion is a C*-algebra, we shall denote $U_{\nc}(2)$. The subscript {$\nc$} stands for ``non-commuting" and is intended to remind us that the generators $u_{ij}$ do not commute. (This approach generalizes naturally to $d \times d$ matrices of operators,  for $d > 2$, where the C*-algebra is denoted as $U_{\nc}(d)$.)

Note that in the commuting operator framework, the set $U_{ij}$ and the set $V_{kl}$ each gives rise to a *-homomorphism and that these two *-homomorphisms commute. Thus, it is not hard to see that we have a one-to-one correspondence between commuting operator frameworks and *-homomorphisms of $U_{\nc}(2) \otimes U_{\nc}(2)$ into $B(\cl R)$.  Since we want to consider all commuting operator frameworks, we are lead to study $U_{\nc}(2) \otimes_{\max} U_{\nc}(2)$.  

The study of states on this algebra turns out to be closely related to embezzlement constructions as the following result shows.

\begin{thm} \label{statecharembezzlement} There exists a perfect embezzlement protocol in the commuting operator framework if and only if there exists a state 
$s: U_{\nc}(2) \otimes_{\max} U_{\nc}(2) \to \bb C$ such that
\begin{itemize}
\item $s(u_{00} \otimes u_{00}) = \frac{1}{\sqrt{2}}$, \vspace*{1mm}
\item $s(u_{10} \otimes u_{00}) = \,0$, \vspace*{1mm}
\item $s(u_{00} \otimes u_{10}) = \,0$, \vspace*{1mm}
\item $s(u_{10} \otimes u_{10}) = \frac{1}{\sqrt{2}}$.
\end{itemize}
\end{thm}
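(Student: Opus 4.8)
The plan is to rewrite the defining equation~\eqref{eq:perfect-embezzlement} as four operator equations on $\RR$ and then recognize the four scalar conditions on $s$ as their ``vector-state'' shadows. Using the matrix representations established above, $U_A(\ket{j}\otimes h)=\sum_i \ket{i}\otimes U_{ij}h$ and $U_B(h\otimes\ket{l})=\sum_k V_{kl}h\otimes\ket{k}$, I would compute the output
\[
(U_A\otimes I_2)(I_2\otimes U_B)\bigl(\ket{0}\otimes\psi\otimes\ket{0}\bigr)=\sum_{i,k=0}^1 \ket{i}\otimes U_{i0}V_{k0}\psi\otimes\ket{k}.
\]
Matching components against $\sqq\ket{0}\otimes\psi\otimes\ket{0}+\sqq\ket{1}\otimes\psi\otimes\ket{1}$ shows that a protocol embezzles perfectly if and only if $U_{00}V_{00}\psi=\sqq\psi$, $U_{10}V_{00}\psi=0$, $U_{00}V_{10}\psi=0$, and $U_{10}V_{10}\psi=\sqq\psi$.

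For the ``only if'' direction, given a perfect embezzlement protocol the $*$-commuting relations let me define a $*$-homomorphism $\pi\colon U_{\nc}(2)\otimes_{\max}U_{\nc}(2)\to B(\RR)$ with $\pi(u_{ij}\otimes 1)=U_{ij}$ and $\pi(1\otimes u_{kl})=V_{kl}$ (this is exactly the universal property of the maximal tensor product applied to the pair of commuting representations). Then $s(x):=\langle\psi,\pi(x)\psi\rangle$ is a state, since $\|\psi\|=1$, and $s(u_{ij}\otimes u_{kl})=\langle\psi,U_{ij}V_{kl}\psi\rangle$. Taking the inner product of each of the four operator equations with $\psi$ yields precisely the four asserted values of $s$.

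The substance is the converse. Given a state $s$ with the four prescribed values, I would apply the GNS construction to obtain a representation $\pi_s\colon U_{\nc}(2)\otimes_{\max}U_{\nc}(2)\to B(\HH_s)$ and a unit cyclic vector $\xi_s$ with $s(x)=\langle\xi_s,\pi_s(x)\xi_s\rangle$. Setting $\RR=\HH_s$, $\psi=\xi_s$, $U_{ij}=\pi_s(u_{ij}\otimes 1)$, and $V_{kl}=\pi_s(1\otimes u_{kl})$, the universal property of $\otimes_{\max}$ again guarantees that the two families $*$-commute and assemble into unitaries $U_A=(U_{ij})$, $U_B=(V_{kl})$, i.e.\ a genuine commuting operator framework. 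The main obstacle is that the hypotheses are only \emph{scalar} conditions on $s$, which a priori say nothing about the full \emph{vectors} $U_{i0}V_{k0}\psi$; I must upgrade four numbers to four vector identities. The resolution is a norm argument that exploits unitarity. Write $\Phi:=\sum_{i,k}\ket{i}\otimes U_{i0}V_{k0}\psi\otimes\ket{k}$ for the output and $\Psi:=\sqq\ket{0}\otimes\psi\otimes\ket{0}+\sqq\ket{1}\otimes\psi\otimes\ket{1}$ for the target. Since $(U_A\otimes I_2)(I_2\otimes U_B)$ is unitary, $\|\Phi\|=\|\psi\|=1=\|\Psi\|$, while the two diagonal conditions give $\langle\Psi,\Phi\rangle=\sqq\,s(u_{00}\otimes u_{00})+\sqq\,s(u_{10}\otimes u_{10})=\tfrac12+\tfrac12=1$. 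Hence $\|\Phi-\Psi\|^2=\|\Phi\|^2-2\operatorname{Re}\langle\Psi,\Phi\rangle+\|\Psi\|^2=0$, so $\Phi=\Psi$; comparing components recovers all four operator equations and shows the constructed protocol embezzles perfectly. I note in passing that this argument uses only the two diagonal conditions, so the off-diagonal conditions $s(u_{10}\otimes u_{00})=s(u_{00}\otimes u_{10})=0$ emerge automatically, consistent with the biconditional as stated.
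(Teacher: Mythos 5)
Your proposal is correct and follows essentially the same route as the paper's proof: the forward direction is identical, and your converse uses the GNS representation and the universal property of $\otimes_{\max}$ exactly as the paper does. The only difference is presentational: where the paper forces the four vector equations by saturating the component-wise chain $1 = |\langle U_{00}V_{00}\psi,\psi\rangle|^2 + |\langle U_{10}V_{10}\psi,\psi\rangle|^2 \le \|U_{00}V_{00}\psi\|^2 + \|U_{10}V_{10}\psi\|^2 + \|U_{00}V_{10}\psi\|^2 + \|U_{10}V_{00}\psi\|^2 = 1$, you compute $\|\Phi-\Psi\|^2 = \|\Phi\|^2 - 2\operatorname{Re}\langle\Psi,\Phi\rangle + \|\Psi\|^2 = 0$ directly --- the same unitarity-plus-saturation idea, likewise using only the two diagonal conditions.
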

\begin{proof}
First assume that a perfect embezzlement protocol exists in a commuting operator framework.
Let $\cl R$ be a Hilbert space, let $\psi \in \cl R$ be a unit vector, let $U_A= \big( U_{ij} \big)$ and $U_B= \big( V_{kl} \big)$ be unitaries on $\bb C^2 \otimes \cl H$ and $\cl H \otimes \bb C^2$, respectively, such that $U_A \otimes I_2$ commutes with $I_2 \otimes U_B$ and let
$\psi$ be a catalyst vector for perfect embezzlement of a Bell state,
Define $\pi: U_{\nc}(2) \otimes_{\max} U_{\nc}(2) \to B(\cl H)$ to be the *-homomorphism defined by $\pi(u_{ij} \otimes 1) = U_{i,j}, \, \pi(1 \otimes u_{kl}) = V_{kl}$. Since $\psi$ is a catalyst vector,
\[ (U_A \otimes I_2)(I_2 \otimes U_B) ( \ket{0} \otimes \psi \otimes \ket{0}) = \textstyle{\frac{1}{\sqrt{2}}}( \ket{0} \otimes \psi \otimes \ket{0} + \ket{1} \otimes \psi \otimes \ket{1}). \]

Now define a state on $U_{\nc}(2) \otimes_{\max} U_{\nc}(2)$  by $s(x) = \langle \pi(x) \psi, \psi \rangle$. 
We have that
\begin{align*}
\big( I_2 \otimes U_B\big) \big( U_A \otimes I_2 \big) \ket{0} \otimes \psi \otimes \ket{0} & = \sum_{i,j=0}^1 \ket{i} \otimes U_{i,0}V_{j,0} \psi \otimes \ket{j} 
\\ & = \textstyle{\frac{1}{\sqrt{2}}}\ket{0} \otimes \psi \otimes \ket{0} + \textstyle{\frac{1}{\sqrt{2}}}\ket{1} \otimes \psi \otimes \ket{1},
\end{align*}
which is equivalent to 
\[ U_{00}V_{00} \psi = U_{10}V_{10} \psi = \textstyle{\frac{1}{\sqrt{2}}}\psi \text{\ \  and \ \ } U_{00} V_{10} \psi = U_{10}V_{00} \psi = 0 .\]
From these equations, it follows that the state $s$ satisfies the four conditions.

Conversely, assume that $s: U_{\nc}(2) \otimes_{\max} U_{\nc}(2) \to \bb C$ is a state that satisfies the 4 conditions.
 Let $\pi_s: U_{\nc}(2) \otimes_{\max} U_{\nc}(2) \to B(\cl H_s)$  and $\psi \in \cl H_s$ be the GNS representation of the state so that $s(x) = \langle \pi_s(x) \psi, \psi \rangle$. If we define $U_A: \bb C^2 \otimes \cl H \to \bb C^2 \otimes \cl H$ by $U_A= \big( \pi(u_{ij} \otimes 1) \big)$ and $U_B: \cl H \otimes \bb C^2 \to \cl H \otimes \bb C^2$ by $U_B = \big( \pi(1 \otimes u_{i,j}) \big)$, then $U_A \otimes I_2$ commutes with $I_2 \otimes U_B$.
The operator on the direct sum of four copies of $\cl H$ given by
\[ \begin{pmatrix} U_{00}V_{00} & U_{01}V_{00} & U_{00}V_{01} & U_{01}V_{01} \\ U_{10}V_{00} & U_{11}V_{00} & U_{10}V_{01} & U_{11}V_{01}\\ U_{00} V_{10} & U_{01}V_{10} & U_{00}V_{11} & U_{01}V_{11}\\ U_{10}V_{10} & U_{11}V_{10} & U_{10} V_{11} & U_{11}V_{11} \end{pmatrix} \]
is unitary.

Hence,
\begin{align*} 1 &= |\langle U_{00}V_{00} \psi, \psi \rangle|^2 + |\langle U_{10}V_{10} \psi, \psi \rangle |^2 \le \| U_{00} V_{00} \psi \|^2 + \|U_{10}V_{10}\psi \|^2  \\ &\le
\| U_{00} V_{00} \psi \|^2 + \|U_{10}V_{10}\psi \|^2 + \|U_{00}V_{10}\psi \|^2 + \|U_{10} V_{00} \psi \|^2 =1, \end{align*}
from which it follows that
\[ U_{00}V_{00} \psi = U_{10}V_{10} \psi = \textstyle{\frac{1}{\sqrt{2}}}\psi  \text{\ \  and \ \,} U_{00} V_{10} \psi = U_{10}V_{00} \psi = 0 .\]

Thus,
\begin{align*}
\big( I_2 \otimes U_B\big) \big( U_A \otimes I_2 \big) \ket{0} \otimes \psi \otimes \ket{0} & = \sum_{i,j=0}^1 \ket{i} \otimes U_{i,0}V_{j,0} \psi \otimes \ket{j} 
\\ & = \textstyle{\frac{1}{\sqrt{2}}}\ket{0} \otimes \psi \otimes \ket{0} + \textstyle{\frac{1}{\sqrt{2}}}\ket{1} \otimes \psi \otimes \ket{1}
\end{align*}
and we have a perfect embezzlement protocol. \end{proof}

Thus, we have proven that perfect embezzlement in the commuting operator framework is equivalent to the existence of a state on $U_{\nc}(2) \otimes_{\max} U_{\nc}(2)$ that satisfies the four equations above. We now prove that such a state exists.

\begin{thm}\label{thm:exists-4-state}  There exists a state $s: U_{\nc}(2) \otimes_{\max} U_{\nc}(2) \to \bb C$ that satisfies the four equations of the previous theorem and consequently perfect embezzlement is possible in the commuting operator framework.
\end{thm}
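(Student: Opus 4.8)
The plan is to obtain the desired state as a weak*-limit of states coming from the finite-dimensional \emph{approximate} embezzlement protocols guaranteed by~\cite{vanDamH03}. The key observation is that any approximate protocol in the tensor product framework is, in particular, a commuting operator framework: if Alice's entries $U_{ij}$ act on $\cl R_A$ and Bob's entries $V_{kl}$ act on $\cl R_B$, then, after extending each by the identity on the complementary factor, they act on disjoint tensor factors of $\cl R = \cl R_A \otimes \cl R_B$ and therefore automatically *-commute. By the correspondence established before Theorem~\ref{statecharembezzlement} between commuting operator frameworks and *-homomorphisms of $U_{\nc}(2) \otimes_{\max} U_{\nc}(2)$, each such protocol yields a *-homomorphism $\pi_n$ into $B(\cl R_n)$ together with a catalyst unit vector $\psi_n$, and hence a state $s_n(x) = \langle \pi_n(x)\psi_n, \psi_n \rangle$.

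First I would make precise how closeness of fidelity translates into approximate satisfaction of the four equations. Write $\eta_n = \sqq\ket{0}\otimes\psi_n\otimes\ket{0} + \sqq\ket{1}\otimes\psi_n\otimes\ket{1}$ for the target and $\xi_n = \sum_{i,j}\ket{i}\otimes U_{i0}V_{j0}\psi_n\otimes\ket{j}$ for the actual output. After multiplying $U_A$ by a global phase (which preserves unitarity and the *-commutation relations, hence stays within the commuting operator framework), we may assume $\langle \xi_n, \eta_n\rangle \ge 0$, so the van Dam--Hayden fidelity bound $1-\epsilon_n$ gives $\|\xi_n - \eta_n\|^2 = 2 - 2\sqrt{1-\epsilon_n} \to 0$; call this bound $\delta_n \to 0$. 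Since $s_n(u_{i0}\otimes u_{j0}) = \langle U_{i0}V_{j0}\psi_n, \psi_n\rangle = \langle \xi_n, \ket{i}\otimes\psi_n\otimes\ket{j}\rangle$, and the corresponding inner products of $\eta_n$ are $\sqq,\sqq,0,0$, taking inner products of $\xi_n - \eta_n$ against the four vectors $\ket{i}\otimes\psi_n\otimes\ket{j}$ yields
\begin{align*}
\bigl|s_n(u_{00}\otimes u_{00}) - \sqq\bigr| &\le \delta_n, & \bigl|s_n(u_{10}\otimes u_{10}) - \sqq\bigr| &\le \delta_n, \\
\bigl|s_n(u_{10}\otimes u_{00})\bigr| &\le \delta_n, & \bigl|s_n(u_{00}\otimes u_{10})\bigr| &\le \delta_n.
\end{align*}

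Next I would pass to the limit. The state space of the unital C*-algebra $U_{\nc}(2)\otimes_{\max}U_{\nc}(2)$ is a weak*-closed subset of the unit ball of the dual, hence weak*-compact by Banach--Alaoglu. Therefore the sequence $(s_n)$ has a weak*-cluster point $s$, which is again a state. Evaluating the convergent subnet on the four fixed generators $u_{i0}\otimes u_{j0}$ and using $\delta_n \to 0$ shows that $s$ satisfies the four equations \emph{exactly}. Combined with Theorem~\ref{statecharembezzlement}, this produces a perfect embezzlement protocol in the commuting operator framework.

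The hard part is conceptual rather than computational: one must recognize that the obstruction of Theorem~\ref{thm:no-spatial-embezzlement} (the fact that the catalysts $\psi_\epsilon$ do not converge to a genuine vector state) is circumvented precisely because we take a limit of \emph{states} on a single fixed C*-algebra, not a limit of the catalyst vectors or of the protocols themselves. Weak*-compactness of the state space is exactly what promotes the family of approximate solutions to an exact one, at the cost of producing the state abstractly, whose GNS realization in Theorem~\ref{statecharembezzlement} gives no explicit description of $\cl R$, $\psi$, $U_A$, or $U_B$. An explicit construction is deferred to the next section.
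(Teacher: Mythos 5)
Your proposal is correct and takes essentially the same route as the paper's proof: build states $s_n(x) = \langle \pi_n(x)\psi_n, \psi_n\rangle$ from the finite-dimensional van Dam--Hayden protocols, observe that they satisfy the four equations up to errors tending to zero, extract a weak*-cluster point using compactness of the state space, and conclude via Theorem~\ref{statecharembezzlement}. The only difference is that you spell out details the paper leaves implicit (the automatic *-commutation of operators acting on distinct tensor factors, and the phase/fidelity bookkeeping translating the fidelity bound into the four approximate equations), which only strengthens the write-up.
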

\begin{proof} By the results of \cite{vanDamH03},  we have finite dimensional Hilbert spaces $H_n$ unit vectors $h_n
\in H_n$ and unitary operators $U_n,  V_n$ on $H_n \otimes \bb C^2$,  such that
$(U_n \otimes I_2) (I_2 \otimes V_n)(\ket{0} \otimes h_n \otimes \ket{0}) -
\frac{1}{\sqrt{2}}(\ket{0} \otimes h_n \otimes \ket{0} + \ket{1} \otimes h_n \otimes
\ket{1})$ has norm less than $1/n$.

These operators induce *-homomorphisms, $\pi_n: U_{\nc}(2) \otimes_{\max} U_{\nc}(2) \to B(H_n)$ and states $s_n: U_{\nc}(2) \otimes_{\max} U_{\nc}(2) \to \bb C$ defined by $s_n(x) = \langle\pi_n(x) h_n , h_n \rangle$. These states satisfy:
\begin{itemize}
\item $\bigl| s_n( u_{00} \otimes u_{00}) - \frac{1}{\sqrt{2}} \bigr| < \frac{1}{n}$, \vspace*{1mm}
\item $| s_n(u_{10} \otimes u_{00}) | < \frac{1}{n}$, \vspace*{1mm}
\item $| s_n( u_{00} \otimes u_{10}) | < \frac{1}{n}$, \vspace*{1mm}
\item $ \bigl| s_n(u_{10} \otimes u_{10}) - \frac{1}{\sqrt{2}} \bigr| < \frac{1}{n}$.
\end{itemize}
Now by the fact that the state space of any unital C*-algebra is compact in the weak*-topology, we may take a limit point $s$ of this sequence of states. Since the value of $s(u_{i,j} \otimes u_{k,l})$ must be a limit of the values of $s_n$ on these same elements, $s$ will be a state on $U_{\nc}(2) \otimes_{\max} U_{\nc}(2)$ that satisfies the 4 conditions exactly. 
\end{proof}

\begin{remark}  From \cite{vanDamH03}, each of the states, $s_n$ appearing in the above proof is actually a state on $U_{\nc}(2) \otimes_{\min} U_{\nc}(2)$. Hence, by taking a limit point, we obtain a state $s: U_{\nc}(2) \otimes_{\min} U_{\nc}(2) \to \bb C$ that satisfies the 4 equations of Theorem~\ref{statecharembezzlement}. If we apply the GNS construction or any other method to represent it as $s(x) = \langle \pi(x) \psi, \psi \rangle$ on some Hilbert space $\cl H$, where $\pi: U_{\nc}(2) \otimes_{\min} U_{\nc}(2) \to B(\cl H)$, then the representation $\pi$ and the Hilbert space cannot decompose as a tensor product. Otherwise we would achieve perfect embezzlement in a tensor product framework.  Hence, we obtain an example of a state on a minimal tensor product, such that it cannot be represented using a *-homomorphism that is a spatial tensor product.  In fact, no state on $U_{\nc}(2) \otimes_{\min} U_{\nc}(2)$ that satisfies just those 4 equations can have a spatial tensor product representation.
\end{remark}

\begin{remark}  The coefficients that appear in Theorem~\ref{statecharembezzlement} are a consequence of the fact that we are embezzling a Bell state. If we wish instead for a perfect embezzlement protocol of a more general vector state, say, of the form
$\sum_{i,j=0}^{d-1} \alpha_{ij} \ket{i} \otimes \ket{j}$,
then this is equivalent to the existence of a state $s$ on $U_{\nc}(d) \otimes_{\max} U_{\nc}(d)$ satisfying
$s(u_{i0} \otimes u_{j0}) = \alpha_{ij}$, for $0 \le i,j < d$. Moreover, it is shown in \cite{vanDamH03} that every vector in $\bb C^d \otimes \bb C^d$ can be approximately embezzled in a finite dimensional scenario.
Therefore, arguing as above, there is always a state $s$ on $U_{\nc}(d) \otimes_{\min} U_{\nc}(d)$ satisfying the $d^2$ equations.
\end{remark}


\section{Explicit construction of a perfect embezzlement protocol in a commuting-operator framework}\label{sec:explicit}

The previous section proves the existence of a perfect embezzlement protocol, but without constructing one explicitly.
Some of the steps of the proof are nonconstructive. 
In Theorem~\ref{thm:exists-4-state}, an abstract state is obtained by invoking an existence theorem using the weak*-compactness of the set of all states; moreover, in Theorem~\ref{statecharembezzlement}, the Hilbert space is obtained by applying the GNS representation of the state, which is based on the completion of an abstract C*-algebra.
In this section, we give an explicit commuting-operator protocol for perfect embezzlement.
We explain the technique for Bell states, and, at the end of the section, explain how to it extends to more general entangled states.

\subsection{The resource space $\mathcal R$ and shift operations on this space}

The resource space is the Hilbert space $\ell^2$, whose orthonormal basis is countably infinite.
In order to define the operations used in the protocol, it is useful to think of this space in terms of countably infinite tensor products of states, where all but finitely many of them are fixed.

First, consider the set of infinite tensor products of 2-qubit computational basis states, where all but finitely many of them are in state $\ket{00}$.
We can express these states as $\ket{\dots x_2 x_1 x_0\,,\, \dots y_2 y_1 y_0}$, 
or as $\ket{x,y}$ (where $x, y \in \NN$, and $x_j$ and $y_j$ are the binary digits of $x$ and $y$, respectively, in position $j$).

Next, consider the set of infinite tensor products of 2-qubit Bell basis states where all but finitely many of them are $\frac{1}{\sqrt 2}\ket{00} + \frac{1}{\sqrt 2}\ket{11}$.
Let us denote these states as $\ket{0.x_{-1}x_{-2}x_{-3}\dots\,,\, 0.y_{-1}y_{-2}y_{-3}\dots}$, with the convention that the qubits in position $-j$ (i.e., the $j$\textsuperscript{th} qubit pair, corresponding to bits $x_{-j}$ and $y_{-j}$) are in the Bell basis state  
\begin{align}
\textstyle{\frac{1}{\sqrt 2}}\ket{0y_{-j}} + \textstyle{\frac{1}{\sqrt 2}}(-1)^{x_{-j}}\ket{1\overline{y_{-j}}}.
\end{align}

A convenient way of denoting an orthonormal basis for the (spatial) tensor product of the Hilbert spaces generated by the two aforementioned sets is as the set of all 
\begin{align}\label{eq:basis-state}
\ket{\dots x_2 x_1 x_0\,\mbox{\Large .}\, x_{-1} x_{-2} \dots\,,\ 
\dots y_2 y_1 y_0\,\mbox{\Large .}\, y_{-1} y_{-2}\dots},
\end{align}
with all but finitely many $x_j$ and $y_{j}$ set to 0.
Equivalently, each basis state can be written as $\ket{x,y}$, where $x$ and $y$ are dyadic rational numbers%
\footnote{A dyadic rational number is of the form 
$x = a/2^b$, where $a, b \in \NN$.
Each dyadic $x$ can be written in binary as $x = x_{\ell} \dots x_2 x_1 x_0 \hspace*{0.3mm}\mbox{\Large .}\hspace*{0.3mm} x_{-1} x_{-2} \dots x_{-r}$.
Formally, for all $j \in \ZZ$, \textit{bit $j$ of $x$} is defined as $x_j = \lfloor x 2^{-j} \rfloor \bmod 2$.
Note that $2x$ is $x$ with all the binary digits shifted left by 1.}.

Intuitively, these basis states can be thought of as two-way infinite tensor products, as illustrated in Figure~\ref{fig:tensor-structure}.
\begin{figure}[!ht]
\begin{center}
\setlength{\unitlength}{0.14mm}
\begin{picture}(40,160)(0,0)
\put(0,70){\makebox(40,40){\small $\cdots$}}
\put(12,0){\makebox(20,20){\tiny $\cdots$}}
\end{picture}
\begin{picture}(400,160)(0,0)
\linethickness{0.5pt}
\put(20,40){\circle{20}}
\put(20,140){\circle{20}}
\put(80,40){\circle{20}}
\put(80,140){\circle{20}}
\put(140,40){\circle{20}}
\put(140,140){\circle{20}}
\put(200,40){\circle{20}}
\put(200,140){\circle{20}}
\put(260,40){\circle{20}}
\put(260,140){\circle{20}}
\put(320,40){\circle{20}}
\put(320,140){\circle{20}}
\put(380,40){\circle{20}}
\put(380,140){\circle{20}}
\put(260,50){\line(0,1){80}}
\put(320,50){\line(0,1){80}}
\put(380,50){\line(0,1){80}}
\put(10,0){\makebox(20,20){\tiny $3$}}
\put(70,0){\makebox(20,20){\tiny $2$}}
\put(130,0){\makebox(20,20){\tiny $1$}}
\put(190,0){\makebox(20,20){\tiny $0$}}
\put(245,0){\makebox(20,20){\tiny $-1$}}
\put(305,0){\makebox(20,20){\tiny $-2$}}
\put(365,0){\makebox(20,20){\tiny $-3$}}
\end{picture}
\begin{picture}(40,160)(0,0)
\put(0,70){\makebox(40,40){\small $\cdots$}}
\put(6,0){\makebox(20,20){\tiny $\cdots$}}
\end{picture}
\end{center}
\caption{\small Schematic picture of the tensor product structure of the basis states. Each circle represents a qubit. In positions $0, 1, 2, \dots$ the qubits are in computational basis states. In positions $-1, -2, \dots$ the qubits are in Bell basis states.}\label{fig:tensor-structure}
\end{figure}
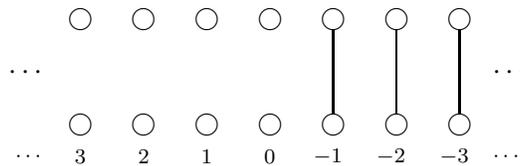
On the left are computational basis states (with all but finitely many in state $\ket{00}$).
On the right are Bell basis states (with all but finitely many
in state $\frac{1}{\sqrt 2}\ket{00} + \frac{1}{\sqrt 2}\ket{11})$.

We now define a \textit{left shift} $L$ on the Hilbert space spanned by these basis states.
Intuitively, $L$ shifts the two-way infinite tensor product to the left by one.
Formally, we define $L$ as the product of two unitaries.
First, define $L_1$ as the left shift of the digits of $x$ and $y$
\begin{align}
L_1 & \ket{\dots x_2 x_1 x_0 \,.\, x_{-1} x_{-2} \dots\,,
\ \dots y_2 y_1 y_0\,.\, y_{-1} y_{-2} \dots} \\
&= 
\ket{\dots x_1 x_0 x_{-1}\,.\, x_{-2} y_{-3} \dots\,,
\ \dots y_1 y_0 y_{-1} \,.\, y_{-2} y_{-3} \dots},\end{align}
or, equivalently, as
$L_1 \ket{x,y} = \ket{2x,2y}$.
$L_1$ is unitary because it is a permutation of the basis states.
Note that $L_1$ does not implement the desired left shift $L$ because the qubits in position $-1$ are in the Bell basis whereas the qubits in position $0$ are in the computational basis.
A basis conversion is needed when position $-1$ is shifted to position $0$.
Define $L_2$ to perform this basis conversion in position $0$ as
\begin{align}
L_2 & \ket{\dots x_2 x_1 x_0 \,.\, x_{-1} x_{-2} \dots \,,
\ \dots y_2 y_1 y_0 \,.\, y_{-1} y_{-2}\dots} \\
&= 
\sqq\ket{\dots x_2 x_1 0 \,.\, x_{-1} x_{-2} \dots \,,
\ \dots y_2 y_1 y_0 \,.\, y_{-1} y_{-2}\dots} \\
&\ \ + \sqq(-1)^{x_0}\ket{\dots x_2 x_1 1 \,.\, x_{-1} x_{-2} \dots \,,
\ \dots y_2 y_1 \overline{y_0} \,.\, y_{-1} y_{-2}\dots}.
\end{align}
This can be equivalently expressed in terms of arithmetic operations on dyadic rationals as
\begin{align}
L_2 \ket{x,y} 
= 
\sqq\ket{x - x_0,y} + \sqq(-1)^{x_0}\ket{x - x_0 + 1, y - 2y_0 + 1}.
\end{align}
$L_2$ is unitary because it is a direct sum of $4 \times 4$ unitaries.
Finally, define $L = L_2 L_1$, which is unitary because $L_1$ and $L_2$ are unitary.

An interesting property  of $L$ is that applying this operation to the state $\ket{0.0,0.0}$ yields $\sqq\ket{0.0,0.0}+ \sqq\ket{1.0,1.0}$.
In the tensor product picture, $L$ leaves the state of all the qubits intact except for the qubits in position 0, whose state changes from $\ket{00}$ to 
$\frac{1}{\sqrt 2}\ket{00} + \frac{1}{\sqrt 2}\ket{11}$.
This is performing something like an embezzlement transformation (in a manner reminiscent of the imaginary ``Hilbert hotel"); 
however, this $L$ does not decompose into two commuting operations that have the structure illustrated in Figure~\ref{fig:commuting-structure}.
In order to obtain such a decomposition, we need to enlarge our Hilbert space.

We begin with some intuition.
In Figure~\ref{fig:tensor-structure}, assume that Alice possesses the qubits in the first row and Bob possesses the qubits in the second row. 
When Alice's qubits are shifted to the left by one, the picture changes to that of Figure~\ref{fig:tensor-offset}.
\begin{figure}[!ht]
\begin{center}
\setlength{\unitlength}{0.14mm}
\begin{picture}(40,180)(0,0)
\put(0,70){\makebox(40,40){\small $\cdots$}}
\put(12,0){\makebox(20,20){\tiny $\cdots$}}
\put(12,160){\makebox(20,20){\tiny $\cdots$}}
\end{picture}
\begin{picture}(400,180)(0,0)
\linethickness{0.5pt}
\put(20,40){\circle{20}}
\put(20,140){\circle{20}}
\put(80,40){\circle{20}}
\put(80,140){\circle{20}}
\put(140,40){\circle{20}}
\put(140,140){\circle{20}}
\put(200,40){\circle{20}}
\put(200,140){\circle{20}}
\put(260,40){\circle{20}}
\put(260,140){\circle{20}}
\put(320,40){\circle{20}}
\put(320,140){\circle{20}}
\put(380,40){\circle{20}}
\put(380,140){\circle{20}}
\put(206,131){\line(3,-5){49}}
\put(266,131){\line(3,-5){49}}
\put(326,131){\line(3,-5){49}}
\put(386,131){\line(3,-5){16}}
\put(10,0){\makebox(20,20){\tiny $3$}}
\put(70,0){\makebox(20,20){\tiny $2$}}
\put(130,0){\makebox(20,20){\tiny $1$}}
\put(190,0){\makebox(20,20){\tiny $0$}}
\put(245,0){\makebox(20,20){\tiny $-1$}}
\put(305,0){\makebox(20,20){\tiny $-2$}}
\put(365,0){\makebox(20,20){\tiny $-3$}}
\put(10,160){\makebox(20,20){\tiny $3$}}
\put(70,160){\makebox(20,20){\tiny $2$}}
\put(130,160){\makebox(20,20){\tiny $1$}}
\put(190,160){\makebox(20,20){\tiny $0$}}
\put(245,160){\makebox(20,20){\tiny $-1$}}
\put(305,160){\makebox(20,20){\tiny $-2$}}
\put(365,160){\makebox(20,20){\tiny $-3$}}\end{picture}
\begin{picture}(5,180)(0,0)
\end{picture}
\begin{picture}(40,180)(0,0)
\put(0,70){\makebox(40,40){\small $\cdots$}}
\put(6,0){\makebox(20,20){\tiny $\cdots$}}
\put(6,160){\makebox(20,20){\tiny $\cdots$}}
\end{picture}
\end{center}
\caption{\small Schematic picture of the tensor product of basis states when Alice's qubits (top row) are shifted left by one.
Pairs of circles connected by lines are in the Bell basis.
Such states are orthogonal to all states of the form of Figure~\ref{fig:tensor-structure}.}
\label{fig:tensor-offset}
\end{figure}
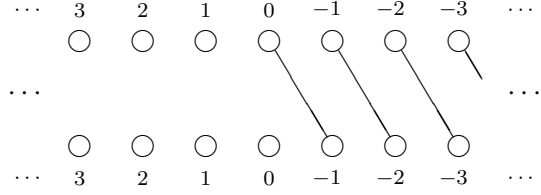
This can be equivalently expressed by shifting the labels of Alice's qubits as in Figure~\ref{fig:tensor-offset-equiv}.
\begin{figure}[!ht]\label{fig:tensor-offset2}
\begin{center}
\setlength{\unitlength}{0.14mm}
\begin{picture}(40,180)(0,0)
\put(0,70){\makebox(40,40){\small $\cdots$}}
\put(12,0){\makebox(20,20){\tiny $\cdots$}}
\put(12,160){\makebox(20,20){\tiny $\cdots$}}
\end{picture}
\begin{picture}(400,180)(0,0)
\linethickness{0.5pt}
\put(20,40){\circle{20}}
\put(20,140){\circle{20}}
\put(80,40){\circle{20}}
\put(80,140){\circle{20}}
\put(140,40){\circle{20}}
\put(140,140){\circle{20}}
\put(200,40){\circle{20}}
\put(200,140){\circle{20}}
\put(260,40){\circle{20}}
\put(260,140){\circle{20}}
\put(320,40){\circle{20}}
\put(320,140){\circle{20}}
\put(380,40){\circle{20}}
\put(380,140){\circle{20}}
\put(260,50){\line(0,1){80}}
\put(320,50){\line(0,1){80}}
\put(380,50){\line(0,1){80}}
\put(10,0){\makebox(20,20){\tiny $3$}}
\put(70,0){\makebox(20,20){\tiny $2$}}
\put(130,0){\makebox(20,20){\tiny $1$}}
\put(190,0){\makebox(20,20){\tiny $0$}}
\put(245,0){\makebox(20,20){\tiny $-1$}}
\put(305,0){\makebox(20,20){\tiny $-2$}}
\put(365,0){\makebox(20,20){\tiny $-3$}}
\put(10,160){\makebox(20,20){\tiny $4$}}
\put(70,160){\makebox(20,20){\tiny $3$}}
\put(130,160){\makebox(20,20){\tiny $2$}}
\put(190,160){\makebox(20,20){\tiny $1$}}
\put(250,160){\makebox(20,20){\tiny $0$}}
\put(305,160){\makebox(20,20){\tiny $-1$}}
\put(365,160){\makebox(20,20){\tiny $-2$}}
\end{picture}
\begin{picture}(40,180)(0,0)
\put(0,70){\makebox(40,40){\small $\cdots$}}
\put(6,0){\makebox(20,20){\tiny $\cdots$}}
\put(6,160){\makebox(20,20){\tiny $\cdots$}}
\end{picture}
\end{center}
\caption{\small An alternative schematic picture of the tensor product of Figure~\ref{fig:tensor-offset}, where the labels of the qubits on the top row are adjusted to reflect the shift in the top row.}\label{fig:tensor-offset-equiv}
\end{figure}
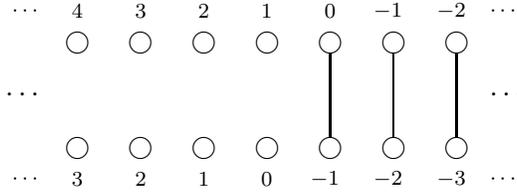
More generally, for an arbitrary $r \in \ZZ$, a left shift of the first row by $r$ is illustrated in Figure~\ref{fig:tensor-offset-r}.
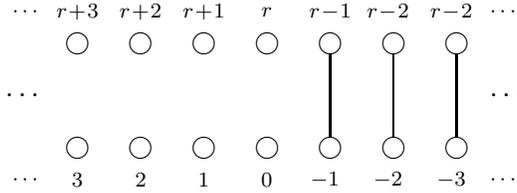
\begin{figure}[!ht]\label{fig:tensor-offset3}
\begin{center}
\setlength{\unitlength}{0.14mm}
\begin{picture}(40,180)(0,0)
\put(0,70){\makebox(40,40){\small $\cdots$}}
\put(12,0){\makebox(20,20){\tiny $\cdots$}}
\put(12,160){\makebox(20,20){\tiny $\cdots$}}
\end{picture}
\begin{picture}(400,180)(0,0)
\linethickness{0.5pt}
\put(20,40){\circle{20}}
\put(20,140){\circle{20}}
\put(80,40){\circle{20}}
\put(80,140){\circle{20}}
\put(140,40){\circle{20}}
\put(140,140){\circle{20}}
\put(200,40){\circle{20}}
\put(200,140){\circle{20}}
\put(260,40){\circle{20}}
\put(260,140){\circle{20}}
\put(320,40){\circle{20}}
\put(320,140){\circle{20}}
\put(380,40){\circle{20}}
\put(380,140){\circle{20}}
\put(260,50){\line(0,1){80}}
\put(320,50){\line(0,1){80}}
\put(380,50){\line(0,1){80}}
\put(10,0){\makebox(20,20){\tiny $3$}}
\put(70,0){\makebox(20,20){\tiny $2$}}
\put(130,0){\makebox(20,20){\tiny $1$}}
\put(190,0){\makebox(20,20){\tiny $0$}}
\put(245,0){\makebox(20,20){\tiny $-1$}}
\put(305,0){\makebox(20,20){\tiny $-2$}}
\put(365,0){\makebox(20,20){\tiny $-3$}}
\put(10,160){\makebox(20,20){\tiny $r\!+\!3$}}
\put(70,160){\makebox(20,20){\tiny $r\!+\!2$}}
\put(130,160){\makebox(20,20){\tiny $r\!+\!1$}}
\put(190,160){\makebox(20,20){\tiny $r$}}
\put(250,160){\makebox(20,20){\tiny $r\!-\!1$}}
\put(305,160){\makebox(20,20){\tiny $r\!-\!2$}}
\put(365,160){\makebox(20,20){\tiny $r\!-\!2$}}
\end{picture}
\begin{picture}(40,180)(0,0)
\put(0,70){\makebox(40,40){\small $\cdots$}}
\put(6,0){\makebox(20,20){\tiny $\cdots$}}
\put(6,160){\makebox(20,20){\tiny $\cdots$}}
\end{picture}
\end{center}
\caption{\small Schematic picture of a left shift of Alice's qubits (top row) by $r \in \ZZ$.}\label{fig:tensor-offset-r}
\end{figure}

With the picture of Figure~\ref{fig:tensor-offset-r} in mind, define the Hilbert space $\mathcal R$ as having orthonormal basis states of the form $\ket{r,x,y}$, where $x, y$ are dyadic rationals and $r \in \ZZ$ represents the leftward shift of Alice's qubits.
We can interpret $\ket{r,x,y}$ as an encoding of the following \textit{logical state}.
For all $j \in \ZZ$, \textit{Alice's logical qubit in position $j+r$} and \textit{Bob's logical qubit in position $j$} are in the joint state
\begin{align}
\begin{cases}
\ket{x_{j}y_{j}} & \text{if $j \ge 0$} \\[2mm]
\sqq\ket{0y_{j}}+\sqq(-1)^{x_{j}}\ket{1\overline{y_{j}}} & \text{if $j < 0$.}
\end{cases}
\end{align}

Now we define the \textit{Alice left shift} $L_A$ as simply
\begin{align}
L_A \ket{r,x,y} = \ket{r+1,x,y}.
\end{align}
$L_A$ is obviously unitary and commutes with $L$, since they act on different 
components of $\ket{r,x,y}$.

Next, define the \textit{Bob left shift} $L_B$ as
\begin{align}
L_B = L_{A}^{*}L
\end{align}
(a left shift of both Alice and Bob's qubits followed by a right shift of Alice's qubits).
Note that $L_A$ and $L_B$ commute, since $L_A$ and $L$ commute.
Also, since $L$ is a left shift by both Alice and Bob and $L_{A}^{*}$ is a right shift by Alice, $L_B$ has no net effect on Alice's logical qubits.

\subsection{Swap operations between $\HH_A$, $\HH_B$ and $\mathcal R$}
Prior to defining our embezzlement protocol, we define 
swap operations between $\HH_A$ and the logical qubit of Alice in position 0 of $\mathcal R$, and between $\HH_B$ and the logical qubit of Bob in position 0 of $\mathcal R$.

The \textit{Bob swap} $S_B$ is defined simply as the unitary operation that acts on $\HH_B \otimes \mathcal R$
as
\begin{align}
S_B \ket{t} \otimes \ket{r,\, x,\, \dots y_1 y_0 \,.\, y_{-1} \dots}
= 
\ket{y_0} \otimes \ket{r,\, x,\, \dots y_1 t \,.\, y_{-1} \dots},
\end{align}
or, equivalently, as
$S_B\ket{t}\otimes\ket{r,x,y} = \ket{y_0}\otimes\ket{r,x,y-y_0+t}$.
$S_B$ is clearly unitary and commutes with $L_A$ since they act on different components of  each basis state $\ket{r,x,y} \in \mathcal R$.

The corresponding \textit{Alice swap}, acting on $\HH_A \otimes \mathcal R$, is more complicated than $S_B$.
First define $\tilde{S}_A$ (a \textit{na\"ive Alice swap}) as
\begin{align}
\tilde{S}_A \ket{s} \otimes \ket{r,\, \dots x_1 x_0 \,.\, x_{-1} \dots,\, y}
= 
\ket{x_0} \otimes \ket{r,\, \dots x_1 s \,.\, x_{-1} \dots,\, y},
\end{align}
or, equivalently, as
$\tilde{S}_A\ket{s}\otimes\ket{r,x,y} = \ket{x_0}\otimes \ket{r,x-x_0+s,y}$.
$\tilde{S}_A$ does not swap with Alice's logical qubit in position 0---moreover, $\tilde{S}_A$ does not commute with $L_B$.
To swap with Alice's logical qubit in position 0, it is convenient to first define the \textit{controlled-$L$}, denoted as $C$,
acting on $\mathcal R$ as 
\begin{align}
C \ket{r,x,y} = L^{r}\ket{r,x,y},
\end{align}
which makes sense because $L^{r}$ acts only on the second and third component of 
$\ket{r,x,y}$.
$C$ is unitary because each $L^{r}$ is unitary and $C$ is a direct sum of all~$L^{r}$.
Intuitively, $C^*\ket{r,x,y}$ is a state in which Alice's \textit{literal} qubit in position $0$ corresponds to Alice's logical qubit in position 0 in $\ket{r,x,y}$.
Now we define the actual \textit{Alice swap} as
\begin{align}
S_A = C \tilde{S}_A C^{*}.
\end{align}
Clearly $S_A$ is unitary and, for each $\ket{r,x,y} \in \mathcal R$, its effect is localized to Alice's logical qubit in position $0$.
$S_A$ and $S_B$ commute because $S_B$ is localized to Bob's logical qubit in position $0$.
Moreover, $S_A$ and $L_B$ commute because, for $\ket{r,x,y} \in \mathcal R$, $L_B$ is localized to Bob's logical qubits.

\subsection{The embezzlement protocol}

The idea is to start with state $\ket{0,\,0.0,\,0.0}$, perform $L_A$ and $L_B$, and then swap the two qubits in position 0 of $\mathcal R$ into $\HH_A$ and $\HH_B$.

Alice performs $U_A = S_A L_A$ and Bob performs $U_B = S_B L_B$. 
Clearly $U_A$ and $U_B$ commute.
Recall that $L_A L_B = L$. The state evolves as:
\medskip
\begin{description}
\item[0. initial state]
$\ket{0}\otimes\ket{0}\otimes\ket{0,\,0.0,\,0.0}$ 
\vspace*{2mm}
\item[1. after $L_A L_B$]
$\ket{0}\otimes\ket{0}\otimes\bigl(\sqq\ket{0,\,0.0,\,0.0}+\sqq\ket{0,\,1.0,\,1.0}\bigr)$
\item[2. after $S_A S_B$]
\vspace*{2mm}
$\bigl(\sqq\ket{0}\otimes\ket{0}+\sqq\ket{1}\otimes\ket{1}\bigr)\otimes\ket{0,\,0.0,\,0.0}$ \end{description}
\medskip
This completes the protocol for perfect embezzlement in the commuting operator framework.

\begin{remark}
It is easy to adapt the above method to embezzle a more general entangled state, say, of the form $\phi = \sum_{i,j=0}^{d-1} \alpha_{ij} \ket{i} \otimes \ket{j}$.
First, redefine $\mathcal{R}$ to be in terms of basis states of the form $\ket{r,x,y}$, where $x$ and $y$ are $d$-adic rational numbers (i.e., with digits in $\mathbb{Z}_d$).
Then it suffices to set the operation $L_2$ (which is the basis change part of the left shift operation) to be any unitary operation on $\mathbb{C}^d \otimes \mathbb{C}^d$ that maps $\ket{0}\otimes\ket{0}$ to $\phi$.
The other parts of the protocol are essentially the same.
\end{remark}


\section{Coherent embezzlement games}

A purported protocol for embezzlement cannot be tested in the way that nonlocal games can, because 
Alice and Bob can perform local operations that perfectly map $\ket{0}\otimes\ket{0}$ to 
$\frac{1}{\sqrt 2}\ket{0}\otimes\ket{0}+\frac{1}{\sqrt 2}\ket{1}\otimes\ket{1}$ 
using the resource of only a single (concealed) Bell state.
Leung, Toner and Watrous~\cite{LeungTW13} proposed a \textit{coherent state exchange} game that is related to embezzlement but is operationally testable.
In this game, Alice and Bob each receive a qutrit from a referee as input and they each return a qubit to the referee, who performs a measurement on the returned state to determine whether they win or lose.
There is no perfect strategy for this game using finite entanglement.
It is shown in~\cite{LeungTW13} that, for all $\epsilon > 0$: there exists a strategy that succeeds with probability $1 - \epsilon$ using $O(\log(1/\epsilon))$-entropy entanglement; moreover, to succeed with probability $1 - \epsilon$ requires entanglement with entropy $\Omega(\log(1/\epsilon))$.

Regev and Vidick~\cite{RV2013} presented a simplification of the coherent state exchange game that has the above properties, but where the outputs are classical bits instead of qubits (the inputs are still qutrits).
In~\cite{RV2013}, this is called the $T_2$ game.
We refer to this as the \textit{coherent embezzlement} game, to highlight its close relationship with embezzlement.

In this section, we begin by reviewing the definition of the coherent embezzlement (a.k.a.\ $T_2$) game.
Then we show that a perfect strategy for embezzlement can be converted to a perfect strategy for coherent embezzlement---and vice versa.
By such reductions, we prove that there is a perfect strategy for coherent embezzlement in the commuting operator framework (Theorem~\ref{thm:perfect-T2}), but there is no such perfect strategy in the tensor product framework (Theorem~\ref{thm:no-perfect-T2}).

We now define the coherent embezzlement game~\cite{RV2013}.
Alice and Bob each receive two qutrits as input
and they each produce a classical bit as output.
The input state that Alice and Bob jointly receive is either $\phi_0$ or $\phi_1$, where
\begin{align}
\phi_0 &= \sqq\ket{0}\otimes\ket{0}+\sqq\bigl(\sqq\ket{1}\otimes\ket{1}+\sqq\ket{2}\otimes\ket{2}\bigr) \\
\phi_1 &= \sqq\ket{0}\otimes\ket{0}-\sqq\bigl(\sqq\ket{1}\otimes\ket{1}+\sqq\ket{2}\otimes\ket{2}\bigr).
\end{align}
Call Alice and Bob's output bits $a$ and $b$ respectively.
The winning condition is that: when the input is $\phi_0$, $a \oplus b = 0$; when the input is $\phi_1$, $a \oplus b = 1$.

Note that the winning condition does not require Alice and Bob's resource state to be used in a catalytic manner; Alice and Bob are free to destroy this state in their strategy.
In this sense, the coherent embezzlement game is simpler than embezzlement, whose definition depends critically on restoring the resource state.

In the remainder of this section, for technical convenience, we represent qutrits as pairs of qubits using the encoding $0 \equiv 00$, $1 \equiv 10$ (for Alice) or $01$ (for Bob), and $2 \equiv 11$.
With this encoding, the input states that Alice and Bob jointly receive can be written as, for $c \in \{0,1\}$,
\begin{align}
\phi_c &= \sqq\ket{00}\otimes\ket{00}+\sqq(-1)^c\bigl(\sqq\ket{10}\otimes\ket{01}+\sqq\ket{11}\otimes\ket{11}\bigr) \\
&= \sqq\ket{0}\ket{00}\ket{0}+\sqq(-1)^c\ket{1}\bigl(\sqq\ket{00}+\sqq\ket{11}\bigr)\ket{1}
\end{align}
(where the first two qubits are Alice's input and the last two qubits are Bob's input).

\begin{thm}\label{thm:perfect-T2}
There is a perfect strategy for the coherent embezzlement game in the commuting operator framework.
\end{thm}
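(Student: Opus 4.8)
The plan is to reduce to the perfect embezzlement protocol whose existence in the commuting operator framework is guaranteed by Theorem~\ref{thm:exists-4-state}; this realizes one direction of the reduction promised in the section introduction. Fix commuting unitaries $U_A,U_B$ and a catalyst $\psi \in \RR$ with
\[(U_A\otimes I)(I\otimes U_B)(\ket0\otimes\psi\otimes\ket0) = \sqq\ket0\otimes\psi\otimes\ket0 + \sqq\ket1\otimes\psi\otimes\ket1.\]
Since $U_A\otimes I$ commutes with $I\otimes U_B$, their inverses commute as well, so $U_A^*,U_B^*$ furnish a commuting \emph{un-embezzlement}: together with $\psi$ they map the Bell state $\beta := \sqq\ket{00}+\sqq\ket{11}$ back to $\ket0\otimes\psi\otimes\ket0$. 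This is the operation the strategy will invoke.

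First I would exploit the structure of the inputs. In the two-qubit-per-party encoding recorded just above the theorem, writing Alice's qubits as $A_1,A_2$ and Bob's as $B_1,B_2$,
\[\phi_c = \sqq\ket0_{A_1}\ket0_{A_2}\ket0_{B_1}\ket0_{B_2} + \sqq(-1)^c\ket1_{A_1}\,\beta_{A_2 B_1}\,\ket1_{B_2},\]
so $A_1$ and $B_2$ are perfectly correlated, and the middle pair $(A_2,B_1)$ carries the product state $\ket{00}$ on the $A_1=B_2=0$ branch but the Bell state $\beta$ on the $A_1=B_2=1$ branch. In other words, $\phi_c$ is a Bell state on $(A_2,B_1)$ \emph{controlled} by $A_1$ (equivalently $B_2$), tensored with a phase-$(-1)^c$ Bell-type state hidden on $(A_1,B_2)$.

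The strategy is then as follows. Alice and Bob share $\psi$ in $\RR$; Alice applies controlled-$U_A^*$ with control $A_1$ and target $(A_2,\RR)$, and Bob applies controlled-$U_B^*$ with control $B_2$ and target $(\RR,B_1)$. On the $0$-branch both controls are off and nothing happens; on the $1$-branch both fire, and the un-embezzlement collapses $\beta_{A_2 B_1}\otimes\psi$ to $\ket0_{A_2}\otimes\psi\otimes\ket0_{B_1}$. Factoring out the restored $\ket0_{A_2}\otimes\psi\otimes\ket0_{B_1}$, the residual state on $(A_1,B_2)$ is $\sqq(\ket0\ket0+(-1)^c\ket1\ket1)$. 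Each party then applies a Hadamard to their remaining qubit ($A_1$ for Alice, $B_2$ for Bob) and measures in the computational basis to obtain $a,b$; a direct check gives that $H\otimes H$ fixes $\sqq(\ket{00}+\ket{11})$ when $c=0$ and sends $\sqq(\ket{00}-\ket{11})$ to $\sqq(\ket{01}+\ket{10})$ when $c=1$, so the outcomes satisfy $a\oplus b = c$ with certainty.

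The one point that needs genuine verification—and where I expect the only real work—is that the two controlled operations form a legitimate commuting-operator strategy. Writing $C_A = \ket0\bra0_{A_1}\otimes I + \ket1\bra1_{A_1}\otimes U_A^*$ and $C_B = \ket0\bra0_{B_2}\otimes I + \ket1\bra1_{B_2}\otimes U_B^*$, I would prove $C_A C_B = C_B C_A$ by expanding into the four control sectors: the control projectors commute because they act on disjoint qubits ($A_1$ versus $B_2$), every mixed sector is immediate, and the only nontrivial sector (both controls on) reduces precisely to the commutation of $U_A^*\otimes I$ with $I\otimes U_B^*$, which holds by the previous paragraph. The Hadamards and final measurements act on each party's own qubits and hence commute across the cut as well. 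Combining this verified commuting strategy with the winning-condition computation establishes the theorem.
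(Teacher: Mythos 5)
Your proposal is correct and takes essentially the same approach as the paper's proof: both strategies apply controlled-$U_A^*$ and controlled-$U_B^*$ (with controls on the qubits that distinguish the $\ket{0}$-branch from the Bell branch) to coherently un-embezzle the Bell component back into the catalyst, then Hadamard the control qubits and measure, yielding $a \oplus b = c$ with certainty. The only cosmetic differences are that you invoke the abstract existence result (Theorem~\ref{thm:exists-4-state}) where the paper uses the explicit protocol of Section~\ref{sec:explicit}, you swap the names of Bob's two qubits, and you spell out the commutation check for the controlled gates that the paper leaves implicit.
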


\begin{proof}
Let $\mathcal{R}$, $U_A$, and $U_B$ be as defined in the unitary embezzlement protocol of section~\ref{sec:explicit}.
For coherent embezzlement, the input Hilbert spaces are 
$\mathcal{H}_{A_1} \otimes \mathcal{H}_{A_2} = \mathcal{H}_{B_1} \otimes \mathcal{H}_{B_2} = \mathbb{C}^2\otimes\mathbb{C}^2$.
We will show that the protocol in Figure~\ref{fig:embezzlement-to-coherent} performs coherent embezzlement.
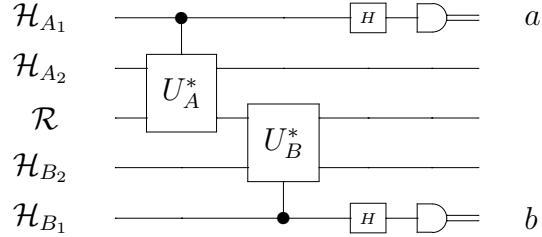
\begin{figure}[!ht]
\centerline{
\Qcircuit @C=1em @R=.7em{
\lstick{\mathcal{H}_{A_1}\ \ \ } & \ctrl{1} & \qw & \gate{\scriptscriptstyle H} & \measureD{\!\!\scriptscriptstyle\frac{}{}} & \cw & \rstick{a} \\
\lstick{\mathcal{H}_{A_2}\ \ \ } & \multigate{1}{U^{*}_{A}}        & \qw          & \qw & \qw & \qw \\
\lstick{\mathcal{R} \ \ \ \ } & \ghost{U^{*}_{A}}        & \multigate{1}{U^{*}_{B}}          & \qw & \qw &\qw \\
\lstick{\mathcal{H}_{B_2}\ \ \ } & \qw & \ghost{U^{*}_{B}}    & \qw  & \qw & \qw \\
\lstick{\mathcal{H}_{B_1}\ \ \ } & \qw & \ctrl{-1} & \gate{\scriptscriptstyle H} & \measureD{\!\!\scriptscriptstyle\frac{}{}} & \cw & \rstick{b}
}
}

\caption{\small Circuit diagram of a protocol for coherent embezzlement from a protocol for embezzlement based on $U_A$, $U_B$, and $\psi \in \mathcal R$.}
\label{fig:embezzlement-to-coherent}
\end{figure}
First note that, since the controlled $U^*_A$ and $U^*_B$ perform the inverse of embezzlement when their control qubits are in state $\ket{1}$, they perform a mapping on $\mathcal{H}_{A_1} \otimes \mathcal{H}_{A_2} \otimes \mathcal{H}_{B_2} \otimes \mathcal{H}_{B_1}$ such that
\begin{align}
\phi_c =&\sqq\ket{0}\ket{00}\ket{0}+\sqq(-1)^c\ket{1}\bigl(\sqq\ket{00}+\sqq\ket{11}\bigr)\ket{1} \\[1mm]
&\mapsto \ \ \sqq\ket{0}\ket{00}\ket{0}+\sqq(-1)^c\ket{1}\ket{00}\ket{1}.
\end{align}
Note that, on the Hilbert space $\mathcal{H}_{A_1} \otimes \mathcal{H}_{B_1}$, this is the pure state $\sqq\ket{00}+\sqq(-1)^c\ket{11}$.
Finally, since the Hadamard gates perform the mapping 
\begin{align}
\sqq\ket{00}+\sqq(-1)^c\ket{11}
\mapsto 
\sqq\ket{0c}+\sqq\ket{1\overline{c}},
\end{align}
the result follows.
\end{proof}

\begin{thm}\label{thm:no-perfect-T2}
There is no perfect strategy for the coherent embezzlement game in the tensor product framework (where Alice and Bob are allowed to use ancillas).
\end{thm}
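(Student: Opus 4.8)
The plan is to prove the contrapositive by reducing to Theorem~\ref{thm:no-spatial-embezzlement}: I will show that any perfect strategy for coherent embezzlement in the tensor product framework can be converted into a perfect embezzlement protocol \emph{with ancillas} in the same framework, which that theorem forbids. First I would put the strategy into purified form. Since each player outputs a single bit, I may assume without loss of generality that Alice applies a local unitary $W_A$ on $\HH_{A_1}\otimes\HH_{A_2}$ together with her half of a shared resource state $\psi$ (and any private ancillas), and then measures a designated output qubit $O_A$ in the computational basis to obtain $a$; Bob does likewise with $W_B$ and $O_B$. Writing $\Phi_c := (W_A\otimes W_B)(\phi_c\otimes\psi)$, a \emph{perfect} strategy means $a\oplus b=c$ with certainty for each input $\phi_c$, i.e.\ $\Phi_c$ is supported entirely on the subspace where $O_A\oplus O_B=c$.

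The crux is two algebraic identities read off from the encoding of $\phi_0,\phi_1$ given in the excerpt (with the ordering $\HH_{A_1},\HH_{A_2},\HH_{B_2},\HH_{B_1}$):
\[
\sqq(\phi_0+\phi_1)=\ket{0}\ket{00}\ket{0},
\qquad
\sqq(\phi_0-\phi_1)=\ket{1}\bigl(\sqq\ket{00}+\sqq\ket{11}\bigr)\ket{1}.
\]
The first vector is a product state across the Alice/Bob cut, while the second is precisely a target Bell state on $\HH_{A_2}\otimes\HH_{B_2}$ with the flag registers $\HH_{A_1},\HH_{B_1}$ in state $\ket{1}$. Applying $W_A\otimes W_B$ to these combinations gives $W_A\otimes W_B\,(\ket{0000}\otimes\psi)=\sqq(\Phi_0+\Phi_1)$ and $W_A\otimes W_B\,(\ket{1}(\sqq\ket{00}+\sqq\ket{11})\ket{1}\otimes\psi)=\sqq(\Phi_0-\Phi_1)$.

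Now I would introduce the local operator $Z_{O_A}\otimes Z_{O_B}$, where $Z_{O_A}$ sends $\ket{a}\mapsto(-1)^a\ket{a}$ on Alice's output qubit and acts as the identity on her other registers (similarly for $Z_{O_B}$). Because $Z_{O_A}\otimes Z_{O_B}$ acts as $(-1)^{O_A\oplus O_B}$ and $\Phi_c$ lives in the $O_A\oplus O_B=c$ subspace, it multiplies $\Phi_c$ by $(-1)^c$, hence interchanges $\Phi_0+\Phi_1\leftrightarrow\Phi_0-\Phi_1$. Conjugating, the operator
\[
G:=\bigl(W_A^*\,Z_{O_A}\,W_A\bigr)\otimes\bigl(W_B^*\,Z_{O_B}\,W_B\bigr)
\]
is a product of local unitaries satisfying $G\,(\ket{0000}\otimes\psi)=\ket{1}(\sqq\ket{00}+\sqq\ket{11})\ket{1}\otimes\psi$. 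This is exactly a perfect embezzlement protocol with ancillas: the entangled resource $\psi$ is restored, the flags $\HH_{A_1},\HH_{B_1}$ (initially the unentangled ancillas $\ket{0}$) pass to $\ket{1}$ and need not be catalytic, and a Bell state is produced on the target $\HH_{A_2}\otimes\HH_{B_2}$ from the product input $\ket{00}$. Theorem~\ref{thm:no-spatial-embezzlement} rules this out, yielding the desired contradiction.

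The main obstacle is not the algebra, which is routine once the local swap $\Phi_0+\Phi_1\leftrightarrow\Phi_0-\Phi_1$ via $Z_{O_A}\otimes Z_{O_B}$ is spotted, but the bookkeeping of the reduction: I must justify that an arbitrary two-outcome measurement strategy can be taken to be a local unitary followed by a computational-basis measurement of one designated qubit, and then align the registers of $G$ with the embezzlement-with-ancillas template so that $\psi$ genuinely serves as the catalytic resource $\RR_A\otimes\RR_B$ while the flag qubits $\HH_{A_1},\HH_{B_1}$ are legitimately treated as (non-catalytic, initially unentangled) ancillas $\cl G_A,\cl G_B$. Care is also needed to confirm that the shared state of the coherent strategy, including any private ancillas folded into $\psi$, is returned unchanged by $G$, so that the catalytic condition of Theorem~\ref{thm:no-spatial-embezzlement} is met.
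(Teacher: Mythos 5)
Your proposal is correct and follows essentially the same route as the paper's proof: purify the strategy into local unitaries followed by single-qubit measurements, use the fact that perfection confines the post-unitary state to the $a\oplus b=c$ subspace so that $Z\otimes Z$ on the output qubits imprints the phase $(-1)^c$, conjugate by the strategy unitaries to swap $\phi_0+\phi_1$ with $\phi_0-\phi_1$, and invoke Theorem~\ref{thm:no-spatial-embezzlement}. The only cosmetic difference is that the paper appends final $X$ gates to reset the flag qubits to $\ket{0}$, whereas you correctly observe that this is unnecessary because the ancillas in that theorem need not be returned to their initial state.
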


\begin{proof}
The idea is that a perfect strategy for coherent embezzlement can be converted into a perfect strategy for embezzlement.
Suppose that there is a perfect strategy for coherent embezzlement.
Without loss of generality, it can be assumed that this strategy is of the form depicted in Figure~\ref{fig:wlog-coherent}, where $\psi \in \mathcal{R}_A \otimes \mathcal{R}_B$, Alice and Bob each perform a local unitary operation followed by measurement of one of their qubits.
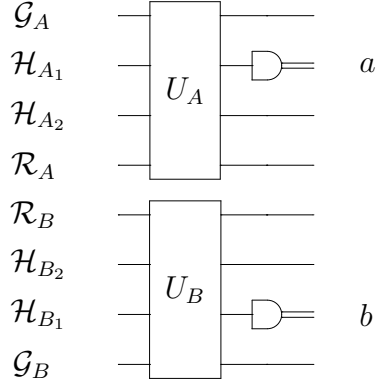
\begin{figure}[!ht]
\centerline{
\Qcircuit @C=1em @R=.7em{
\lstick{\mathcal{G}_A \ \ \ \ \ } & \multigate{3}{U_A} & \qw & \qw \\
\lstick{\mathcal{H}_{A_1} \ \ \ \, } & \ghost{U_{A}} & \measureD{\!\!\scriptscriptstyle\frac{}{}} & \cw & \rstick{a} \\
\lstick{\mathcal{H}_{A_2} \ \ \ \, } & \ghost{U_{A}}        & \qw          & \qw \\
\lstick{\mathcal{R}_A \ \ \ \ \, } & \ghost{U_{A}}        & \qw          & \qw \\
\lstick{\mathcal{R}_B \ \ \ \ \, }  & \multigate{3}{U_{B}} &  \qw         & \qw \\
\lstick{\mathcal{H}_{B_2} \ \ \ \, } & \ghost{U_{B}}        & \qw & \qw  \\
\lstick{\mathcal{H}_{B_1} \ \ \ \, }  & \ghost{U_{B}}        & \measureD{\!\!\scriptscriptstyle\frac{}{}} & \cw & \rstick{b} \\
\lstick{\mathcal{G}_B \ \ \ \ \ } & \ghost{U_B} & \qw & \qw 
}
}
\caption{\small Circuit diagram of an arbitrary protocol for coherent embezzlement.
$\psi \in \mathcal{R}_A \otimes \mathcal{R}_B$ is the resource state.
Without loss of generality, two local unitaries are performed, followed by measurements of two specific qubits.}\label{fig:wlog-coherent}
\end{figure}
Since we are assuming that the strategy is perfect, for input state $\phi_0$, the output bits satisfy $a\oplus b = 0$, and for input state $\phi_1$, the output bits satisfy $a\oplus b = 1$.

Now, consider the protocol in Figure~\ref{fig:coherent-to-embezzlement}.
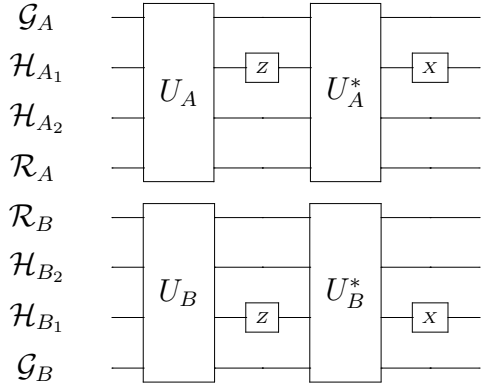
\begin{figure}[!ht]
\centerline{\hspace*{15mm}
\Qcircuit @C=1em @R=.7em{
\lstick{\mathcal{G}_A \ \ \ \ } & \multigate{3}{U_A} & \qw & \multigate{3}{U^{*}_{A}} & \qw & \qw \\
\lstick{\mathcal{H}_{A_1} \ \ \ } & \ghost{U_A} & \gate{\scriptscriptstyle Z} & \ghost{U^{*}_{A}} & \gate{\scriptscriptstyle X} & \qw \\
\lstick{\mathcal{H}_{A_2} \ \ \ } & \ghost{U_A}        & \qw           & \ghost{U^{*}_{A}} & \qw & \qw \\
\lstick{\mathcal{R}_A \ \ \ \ }& \ghost{U_A}  & \qw & \ghost{U^{*}_{A}} & \qw & \qw \\
\lstick{\mathcal{R}_B \ \ \ \ } & \multigate{3}{U_B} &  \qw         & \multigate{3}{U^{*}_{B}} & \qw & \qw \\
\lstick{\mathcal{H}_{B_2} \ \ \ }& \ghost{U_B}        & \qw & \ghost{U^{*}_{B}} & \qw & \qw \\
\lstick{\mathcal{H}_{B_1} \ \ \ } & \ghost{U_B}        & \gate{\scriptscriptstyle Z} & \ghost{U^{*}_{B}} & \gate{\scriptscriptstyle X} & \qw \\
\lstick{\mathcal{G}_B \ \ \ \ } & \ghost{U_B} & \qw & \ghost{U^{*}_{B}} & \qw & \qw}
}
\caption{\small Modification of circuit in Figure~\ref{fig:wlog-coherent} that performs embezzlement using the resource state $\psi \in \mathcal{R}_A \otimes \mathcal{R}_B$.}
\label{fig:coherent-to-embezzlement}
\end{figure}
We begin by showing that, for each $c \in \{0,1\}$, the effect of the first three steps of the protocol in Figure~\ref{fig:coherent-to-embezzlement} is to map $\phi_c\otimes\psi\otimes\gamma_A\otimes\gamma_B$ to $(-1)^c \phi_c\otimes\psi\otimes\gamma_A\otimes\gamma_B$.
Since $a \oplus b = c$, 
After the $U_A \otimes U_B$ operations have been performed, the state must be of the form
\begin{align}
\alpha\ket{0}\otimes\ket{c}\otimes\xi_0 + \beta\ket{1}\otimes\ket{\overline{c}}\otimes\xi_1,
\end{align}
when expressed the state in the Hilbert space 
\begin{align}
\mathcal{H}_{A_1} \otimes \mathcal{H}_{B_1} \otimes (\mathcal{H}_{A_2} \otimes \mathcal{H}_{B_2} \otimes \mathcal{R}_A \otimes \mathcal{R}_B \otimes \mathcal{G}_A \otimes \mathcal{G}_B).
\end{align}
Therefore, after the two $Z$ gates, the state is 
\begin{align}
(-1)^c\bigl(\alpha\ket{0}\otimes\ket{c}\otimes\xi_0 + \beta\ket{1}\otimes\ket{\overline{c}}\otimes\xi_1\bigr),
\end{align}
and this is mapped by $U^{*}_{A} \otimes U^{*}_{B}$ to $(-1)^c \phi_c\otimes\psi\otimes\gamma_A\otimes\gamma_B$.

This implies that the first three steps of the modified protocol maps 
\begin{align}
\ket{00}\otimes\ket{00} = \sqq\phi_0 + \sqq\phi_1
\end{align}
on $\mathcal{H}_{A}\otimes\mathcal{H}_B$ (with the other registers in state $\psi\otimes\gamma_A\otimes\gamma_B$) to
\begin{align}
\sqq\ket{10}\otimes\ket{01}+\sqq\ket{11}\otimes\ket{11} = \sqq\phi_0 - \sqq\phi_1.
\end{align}

Finally, the two $X$ gates of the protocol in Figure~\ref{fig:coherent-to-embezzlement} map the state 
$\sqq\ket{10}\otimes\ket{01}+\sqq\ket{11}\otimes\ket{11}$
to
$\sqq\ket{00}\otimes\ket{00}+\sqq\ket{01}\otimes\ket{10}$.
Therefore, the entire protocol   
maps
$\ket{00}\otimes\ket{00}\otimes\psi\otimes\gamma_A\otimes\gamma_B$
to 
\begin{align}
  \bigl(\sqq\ket{00}\otimes\ket{00}&+\sqq\ket{01}\otimes\ket{10}\bigr)\otimes\psi\otimes\gamma_A\otimes\gamma_B \\
& =\ket{0}(\sqq\ket{00}+\sqq\ket{11})\ket{0}\otimes\psi\otimes\gamma_A\otimes\gamma_B.
\end{align}
This is perfect embezzlement in the registers $\mathcal{H}_{A_2} \otimes \mathcal{H}_{B_2}$.

Since this protocol violates Theorem~\ref{thm:no-spatial-embezzlement}, there cannot exist a perfect strategy for coherent embezzlement using entanglement of the form $\psi \in \mathcal{R}_A \otimes \mathcal{R}_B$.
\end{proof}


\section{Perfect embezzlement requires non-unitary isometries}

In this section we obtain further information about the nature of the unitary operators that appear in perfect embezzlement protocols.
We show that some of the operators that occur in such protocols must contain non-unitary isometries.  This result justifies the need for the register shifts that appear in the explicit protocol of the earlier section.
Also, since non-unitary isometries only exist in infinite dimensions, this result implies that perfect embezzlement is impossible with a finite dimensional resource space.

We begin by reviewing the key facts about unitaries and isometries.  Given a Hilbert space $\cl H$ a linear map $V: \cl H \to \cl H$ is an {\it isometry} provided that $\|Vh\|=\|h\|$ for all $h \in \cl H$. Note that isometries are necessarily one-to-one. The map $V$ is a {\it unitary} provided that it is an isometry and it is onto.

A simple dimension count shows that in finite dimensions every isometry is necessarily a unitary. An example of an isometry that is not a unitary is the {\it unilateral shift S}.  This is the operator on the Hilbert space $\ell^2(\bb N)$ which has an orthonormal basis $\{ \ket{j}: j \in \bb N \}$, defined by $S(\ket{j}) = \ket{j+1}$. It is easily seen that this operator is an isometry and the $\ket{1}$ is orthogonal to the range of $S$, so that $S$ is not onto. Note that the kernel of $S^*$ is equal to the span of $\ket{1}$.

We call a linear map $V: \cl H \to \cl H$ a {\it non-unitary isometry} provided that it is an isometry that is not onto. It is not hard to show that if $V$ is a non-unitary isometry, $e_1$ is any unit vector orthogonal to the range of $V$, and we set $e_n = V^n e_1,$ then this sequence is orthonormal and $V$ acts as a unilateral shift on this subspace.  So, in this sense, non-unitary isometries always contain a space on which they act like unilateral shifts.

Recall that a subspace $\cl M$ is called {\em invariant} for an operator $C$ provided that $C(\cl M) \subseteq \cl M$.

\begin{lemma} Let $\cl H$ be a Hilbert space, $C: \cl H \to \cl H$ be a contraction (i.e., $\|Ch\| \le \|h \|, \, \forall h \in \cl H$), and $\cl M \subseteq \cl H$ be a subspace that is invariant for $C$. If the restriction of $C$ to $\cl M$ acts as a unitary $U$ on $\cl M$, then $\cl M$ is also invariant for $C^*$ and the restriction of $C^*$ to $\cl M$ is $U^*$.
\end{lemma}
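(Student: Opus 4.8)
The plan is to establish, in a single stroke, that $C^* m \in \cl M$ and that $C^* m = U^* m$ for every $m \in \cl M$; this simultaneously yields the invariance of $\cl M$ under $C^*$ and the identification of the restriction. The engine of the argument is the observation that a contraction which exactly preserves the norm of a vector must, at the level of $C^*C$, behave on that vector like an isometry.

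First I would record the two hypotheses in a usable form. Since $U = C|_{\cl M}$ is unitary it is in particular an isometry, so $\|Cm\| = \|m\|$ for all $m \in \cl M$. Since $C$ is a contraction, the operator $I - C^*C$ is positive. Combining these, for each $m \in \cl M$ we have
\[
\langle (I - C^*C) m, m\rangle = \|m\|^2 - \|Cm\|^2 = 0.
\]
Because $I - C^*C \ge 0$ and its quadratic form vanishes at $m$, the vector $m$ must lie in its kernel: writing $I - C^*C = B^2$ with $B = (I - C^*C)^{1/2} \ge 0$, the displayed identity reads $\|Bm\|^2 = 0$, so $Bm = 0$ and hence $(I - C^*C)m = 0$. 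Thus $C^*C m = m$ for every $m \in \cl M$.

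Finally I would exploit the surjectivity of $U$ on $\cl M$. Given any $m' \in \cl M$, unitarity of $U$ provides $m \in \cl M$ with $m' = Um = Cm$. Then
\[
C^* m' = C^*(Cm) = C^*C m = m = U^{-1} m' = U^* m',
\]
which in particular lies in $\cl M$. As $m'$ was arbitrary, $\cl M$ is invariant for $C^*$ and $C^*|_{\cl M} = U^*$, which is exactly the claim.

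The main obstacle — indeed the only step with any content — is the passage forcing $C^*C m = m$. It would be tempting, but insufficient, to argue from $\|Cm\| = \|m\|$ in isolation; one genuinely needs the global contraction hypothesis to know that $I - C^*C \ge 0$, together with the standard fact that a positive operator annihilates every vector on which its quadratic form vanishes. Once $C^*C$ is seen to act as the identity on $\cl M$, the remainder is a formal consequence of the surjectivity of $U$.
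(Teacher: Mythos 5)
Your proof is correct, and it takes a genuinely different route from the paper's. The paper decomposes $\cl H = \cl M \oplus \cl M^{\perp}$ and writes $C$ as a $2\times 2$ operator matrix whose upper-left corner is $U$ and whose lower-left corner vanishes by invariance; it then observes that $C^*$ is a contraction whose $\cl M$-corner is the unitary $U^*$, so the norm identity $\|C^*m\|^2 = \|U^*m\|^2 + \|X^*m\|^2 = \|m\|^2 + \|X^*m\|^2 \le \|m\|^2$ forces the off-diagonal block to annihilate $\cl M$, giving invariance and the identification of $C^*|_{\cl M}$ at once. You avoid operator matrices entirely: you first prove $C^*C m = m$ on $\cl M$ from positivity of $I - C^*C$ and the vanishing of its quadratic form, and then use surjectivity of $U$ to write each $m' \in \cl M$ as $Cm$ and compute $C^* m' = m = U^* m'$. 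Both arguments use the full strength of unitarity, but in different places: the paper uses it through $U^*$ being an isometry (the ``norm budget'' in the corner argument), while you use the isometry half for the $C^*C = I$ step and the surjectivity half explicitly in the final computation. The paper's version makes it visually evident why the conclusion fails when $U$ is a non-unitary isometry (the remark following the lemma reads this off the matrix); yours isolates a reusable principle --- a contraction acts isometrically on a subspace exactly when $C^*C$ is the identity there --- and is arguably cleaner in that it never introduces $\cl M^{\perp}$.
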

\begin{proof} This is easiest to see using operator matrices. Decomposing $\cl H = \cl M \oplus \cl M^{\perp}$ we may write $C= \begin{pmatrix} U &X \\Y & Z \end{pmatrix}$ where $U: \cl M \to \cl M,$ $X: \cl M^{\perp} \to \cl M$, $Y: \cl M \to \cl M^{\perp}$, and $Z: \cl M^{\perp} \to \cl M^{\perp}$.
The fact that $\cl M$ is invariant implies that $Y=0$. The hypothesis that the restriction of $C$ to $\cl M$ is a unitary, means that $U$ is a unitary.

But $C^*= \begin{pmatrix} U^* & X^*\\ 0 & Z^* \end{pmatrix}$ is a contraction with $U^*$ a unitary.  This forces that for every $h \in \cl M,$  $X^*h =0$.  Thus, $X^*=0$ and $\cl M$ is invariant for~$C^*$.
\end{proof}

Note that if $U$ was a non-unitary  isometry, then $U^*$ would have a kernel and we could no longer conclude that $X^*=0$.

The next concept that we shall need is the polar decomposition of an operator. Given an operator $A$ on a Hilbert space $\cl H$, we set $|A|= (A^*A)^{1/2}$.  Let $\cl R(A)$ denote the closure of the range of $A$ and let $\cl R(|A|)$ denote the closure of the range of $|A|$.
Note that for any $x \in \bb \cl H,$ we have
\[ \|Ax\|^2 = \langle Ax, Ax \rangle = \langle x, A^*Ax \rangle =
\langle |A|x, |A|x \rangle = \| |A|x\|^2.\]
From this equality, it follows that there is a well-defined linear isometry $W: \cl R(|A|) \to \cl R(A)$ defined by setting $W(|A|h) = Ah$, so that $A= W|A|.$ If we extend $W$ to a map $U: \cl H \to \cl H$ by sitting $U= WQ$ where $Q$ is the orthogonal projection onto $\cl R(|A|)$ then we still have $A= U|A|$. The map $U$ is called a {\it partial isometry}. This representation $A=U|A|$ is called the {\it polar decomposition} of $A$.

\begin{lemma} 
Let $C$ be a contraction and let $C=U|C|$ be its polar decomposition.  If $h$ is a unit vector such that $||C^nh||=1, \forall n \in \bb N$, then the closed subspace $\cl M$ generated by $\{ C^nh: n \ge 0 \}$ is an invariant subspace for $C$ and for any $v \in \cl M,$  $\|Cv\|=\|v\|$.
\end{lemma}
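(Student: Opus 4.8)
The plan is to prove the two assertions separately. Invariance of $\cl M$ under $C$ is immediate: each generator $C^n h$ satisfies $C(C^n h) = C^{n+1} h$, which again lies in the linear span of $\{C^n h : n \ge 0\}$; since $C$ is bounded and $\cl M$ is by definition the closure of this span, continuity of $C$ gives $C(\cl M) \subseteq \cl M$. The substance of the lemma is the second claim, that $C$ restricts to an isometry on $\cl M$, and this is where I would concentrate the work.

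For that, I would introduce the defect operator. Since $C$ is a contraction we have $0 \le I - C^*C$, so $D := (I - C^*C)^{1/2}$ is a well-defined positive bounded operator, and for every $x \in \cl H$,
\[ \|Dx\|^2 = \langle (I - C^*C)x, x \rangle = \|x\|^2 - \|Cx\|^2. \]
Applying this with $x = C^n h$ and using the hypothesis $\|C^n h\| = \|C^{n+1} h\| = 1$ (which holds for all $n \ge 0$, since $h$ is itself a unit vector) yields $\|D C^n h\|^2 = 1 - 1 = 0$, hence $D C^n h = 0$ for every $n \ge 0$.

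The final step is to promote this from the generators to all of $\cl M$. Because $D$ is a \emph{linear} operator that annihilates each $C^n h$, it annihilates their entire linear span; and because $D$ is bounded, its kernel is closed, so it annihilates the closure $\cl M$. Thus $Dv = 0$ for every $v \in \cl M$, and the displayed identity then gives $\|Cv\|^2 = \|v\|^2 - \|Dv\|^2 = \|v\|^2$, as required. The one point that deserves care—and the main place the argument could go wrong if handled carelessly—is precisely this passage from generators to the whole subspace: the defect functional $v \mapsto \|v\|^2 - \|Cv\|^2$ is quadratic rather than linear, so vanishing on a spanning set would not by itself force vanishing on the span. Packaging the defect as the single bounded linear operator $D$ is exactly what makes the extension legitimate. (Equivalently, one could route the argument through the polar decomposition $C = U|C|$ introduced above: from $\|Cx\| = \||C|x\|$ and $0 \le |C| \le I$ one shows each $C^n h$ is fixed by $|C|$, and the same linearity-plus-continuity propagation carries this to all of $\cl M$.)
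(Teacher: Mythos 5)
Your proof is correct, and it takes a genuinely different route from the paper's. The paper actually uses the polar decomposition named in the statement: writing $P=|C|$, it observes $\|Ph\|=\|Ch\|=1$ forces $Ph=h$ (a positive contraction that preserves the norm of a vector must fix it), then inductively $P(U^nh)=U^nh$ and $C^nh=U^nh$, so that $C$ agrees with the partial isometry $U$ on $\mathcal{M}$ and $\mathcal{M}$ lies inside the closure of $\operatorname{ran}|C|$, the initial space on which $U$ acts isometrically. Your argument replaces all of this with the defect operator $D=(I-C^*C)^{1/2}$ and the identity $\|Dx\|^2=\|x\|^2-\|Cx\|^2$: the hypothesis puts every generator $C^nh$ in $\ker D$, and since $\ker D$ is a closed subspace, $\mathcal{M}\subseteq\ker D$, giving $\|Cv\|=\|v\|$ on $\mathcal{M}$. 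This is shorter, never actually needs the polar decomposition (so the clause ``let $C=U|C|$ be its polar decomposition'' in the hypothesis is superfluous for your proof), and you correctly isolate the one real subtlety---the defect functional $v\mapsto\|v\|^2-\|Cv\|^2$ is quadratic, so the passage from the generators to their closed span must be routed through a linear object; your $\ker D$ and the paper's fixed-point set of $P$ (equally a closed subspace) serve exactly this purpose. What the paper's longer route buys is the extra structural fact that $C$ coincides with a partial isometry on $\mathcal{M}$ and that $|C|$ is the identity there; but the theorem that follows only invokes the isometric action, so your proof would slot into the paper unchanged.
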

\begin{proof} Clearly $\cl M$ is invariant for $C$. Set $P = |C|$ so that $C=UP$ with $0 \le P \le I$ and
$||Ph||=||UPh||= ||Ch||=1$. This implies that $Ph=h$.  Thus, $Ch
= Uh$.  Since  $1= ||C^2h|| = ||UPUPh|| = ||UP(Uh)|| = ||P(UH)|| $ we have that $P(Uh) = Uh$ and so $C^2h = UPUPh =UPUh=U^2h.$ Inductively,  $P(U^nh) = U^nh$ so that
$C^nh = U^n h$.  Thus, $Cv= Uv$ for any $v \in \cl M$.

Since $P(U^nh) = U^n h$ for all $n \ge 0$, we have that $\cl M \subseteq \cl R(|C|)$.  But $U$ acts isometrically on $\cl R(|C|)$ and since $U$ and $C$ are equal on this subspace, $C$ acts isometrically on $\cl M$.
\end{proof}

\begin{thm} Let $\cl H_R, \psi, U_A=(U_{ij}),$ and $U_B= (V_{kl})$ be a perfect embezzlement protocol. If  $\cl M$ is the closed subspace of $\cl H$ spanned by $\{ U_{00}^{*n} \psi: n \ge 0 \}$, then the restriction of $U_{00}^*$ to this invariant subspace is a non-unitary isometry.
\end{thm}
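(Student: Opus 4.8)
The plan is to derive the whole statement from the two preceding lemmas applied to the contraction $C = U_{00}^{*}$ with cyclic vector $h = \psi$. First note that since $U_A = (U_{ij})$ is unitary, the $(0,0)$-entry of $U_A U_A^{*} = I$ reads $U_{00}U_{00}^{*} + U_{01}U_{01}^{*} = I$, so $U_{00}U_{00}^{*} \le I$ and $U_{00}^{*}$ is a contraction. By the lemma asserting that a contraction $C$ with $\|C^n h\| = 1$ for all $n$ restricts to an isometry on the cyclic subspace generated by $h$, it then suffices to show $\|U_{00}^{*n}\psi\| = 1$ for every $n$; this gives both that $\cl M$ is invariant for $U_{00}^{*}$ and that $U_{00}^{*}$ restricts to an isometry on $\cl M$. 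The remaining, genuinely new, task is to show that this isometry is \emph{not onto}.

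For the norm condition I would exploit the four embezzlement identities of Theorem~\ref{statecharembezzlement}, namely $U_{00}V_{00}\psi = U_{10}V_{10}\psi = \sqq\psi$ and $U_{00}V_{10}\psi = U_{10}V_{00}\psi = 0$, together with the $*$-commutation of $\{U_{ij}\}$ and $\{V_{kl}\}$. Writing $\psi = \sqrt{2}\,U_{00}V_{00}\psi$ and using $U_{00}^{*}U_{00} = I - U_{10}^{*}U_{10}$ with $U_{10}V_{00}\psi = 0$, one gets the clean relation $U_{00}^{*}\psi = \sqrt{2}\,V_{00}\psi$, and inductively $U_{00}^{*n}\psi = (\sqrt{2})^n V_{00}^{n}\psi$. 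A squeezing argument pins the norm: since $U_{00}$ and $U_{10}$ are contractions, $\sqq = \|U_{00}V_{00}\psi\| \le \|V_{00}\psi\|$ and $\sqq = \|U_{10}V_{10}\psi\| \le \|V_{10}\psi\|$, while $\|V_{00}\psi\|^2 + \|V_{10}\psi\|^2 = 1$ forces $\|V_{00}\psi\|^2 = \tfrac12$ and hence $\|U_{00}^{*}\psi\| = 1$. Feeding this back into $U_{00}U_{00}^{*} + U_{01}U_{01}^{*} = I$ yields $\|U_{01}^{*}\psi\|^2 = 1 - \|U_{00}^{*}\psi\|^2 = 0$, so $U_{01}^{*}\psi = 0$. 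Because $U_{01}^{*}$ $*$-commutes with $V_{00}$, this propagates as $U_{01}^{*}U_{00}^{*n}\psi = (\sqrt{2})^n V_{00}^{n}U_{01}^{*}\psi = 0$, so the induction $\|U_{00}^{*(n+1)}\psi\|^2 = \|U_{00}^{*n}\psi\|^2 - \|U_{01}^{*}U_{00}^{*n}\psi\|^2 = \|U_{00}^{*n}\psi\|^2$ closes, giving $\|U_{00}^{*n}\psi\| = 1$ for all $n$.

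The crux is showing that $W := U_{00}^{*}|_{\cl M}$ is non-unitary, i.e.\ not onto. I would first observe that $U_{01}^{*}$ annihilates all of $\cl M$ (it kills the spanning vectors $U_{00}^{*n}\psi$ and is bounded), so $U_{00}U_{00}^{*}v = v$ for every $v \in \cl M$. Suppose, for contradiction, that $W$ is onto; being a surjective isometry it is unitary, so by the lemma on contractions that restrict to a unitary on an invariant subspace, $\cl M$ is invariant for $U_{00}$ and $U_{00}|_{\cl M} = W^{*}$. Then $WW^{*} = I_{\cl M}$, and applying this to $\psi$ gives $U_{00}^{*}U_{00}\psi = \psi$; since $U_{00}^{*}U_{00} = I - U_{10}^{*}U_{10}$, this says $\|U_{10}\psi\| = 0$. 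But the squeezing argument above, run with the roles of Alice and Bob interchanged, gives $\|U_{10}\psi\|^2 = \tfrac12 \neq 0$, a contradiction. Hence $W$ is not onto and is a non-unitary isometry. (One can even avoid the unitary-restriction lemma: if $\psi = U_{00}^{*}\xi$ with $\xi \in \cl M$, applying $U_{00}$ and using $U_{00}U_{00}^{*}|_{\cl M} = I$ forces $\xi = U_{00}\psi$ and hence $U_{00}^{*}U_{00}\psi = \psi$, the same contradiction.)

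I expect the main obstacle to be not the isometry part, which is a mechanical consequence of $U_{01}^{*}\psi = 0$ and $*$-commutation, but rather certifying non-surjectivity. The key point is that perfect embezzlement forces $\|U_{10}\psi\| = \sqq \neq 0$, which is exactly the obstruction to $U_{00}^{*}|_{\cl M}$ being invertible on $\cl M$: intuitively, $U_{00}^{*}$ must ``make room'' like a unilateral shift precisely because the off-diagonal entry $U_{10}$ carries a fixed fraction of $\psi$. I would also take care to verify that the contraction bounds and the relation $U_{00}^{*n}\psi = (\sqrt{2})^n V_{00}^{n}\psi$ remain valid in the infinite-dimensional setting, where one must invoke $U_A^{*}U_A = I$ and $U_A U_A^{*} = I$ separately, since this is where finite-dimensional intuition could mislead.
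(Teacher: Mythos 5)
Your proof is correct, and its overall architecture matches the paper's: derive the key relation $U_{00}^{*}\psi = \sqrt{2}\,V_{00}\psi$ from the embezzlement identities, unitarity, and $*$-commutation; establish $\|U_{00}^{*n}\psi\| = 1$ for all $n$; apply the lemma on contractions satisfying $\|C^{n}h\|=1$ to get an invariant subspace $\cl M$ on which $U_{00}^{*}$ acts isometrically; and refute unitarity of the restriction by contradiction. The differences lie in the two computational subarguments, and they are worth noting. For the norm condition, the paper iterates both relations $U_{00}\psi = V_{00}^{*}\psi/\sqrt{2}$ and $V_{00}\psi = U_{00}^{*}\psi/\sqrt{2}$ to obtain $U_{00}^{n}U_{00}^{*n}\psi = \psi$, from which $\|U_{00}^{*n}\psi\|=1$ follows by contractivity alone; you instead pin down $\|V_{00}\psi\|^{2}=\|V_{10}\psi\|^{2}=\tfrac{1}{2}$ by a squeezing argument, deduce $U_{01}^{*}\psi = 0$, and close an induction using the defect identity $\|U_{00}^{*(n+1)}\psi\|^{2} = \|U_{00}^{*n}\psi\|^{2} - \|U_{01}^{*}U_{00}^{*n}\psi\|^{2}$. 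Your route is slightly longer but yields extra structural information: $U_{01}^{*}$ annihilates all of $\cl M$, hence $U_{00}U_{00}^{*}|_{\cl M}=I_{\cl M}$, and all four vectors $U_{00}\psi, U_{10}\psi, V_{00}\psi, V_{10}\psi$ have norm $\tfrac{1}{\sqrt{2}}$. For the final contradiction, the paper derives $\|V_{00}^{*}\psi\| = \sqrt{2}\,\|U_{00}\psi\| = \sqrt{2} > 1$ against $\|V_{00}^{*}\|\le 1$, whereas you derive $U_{10}\psi = 0$ against $\|U_{10}\psi\| = \tfrac{1}{\sqrt{2}}$; both are valid. Finally, your parenthetical observation that surjectivity can be refuted directly from $U_{00}U_{00}^{*}|_{\cl M}=I_{\cl M}$ (if $\psi = U_{00}^{*}\xi$ with $\xi\in\cl M$ then $\xi = U_{00}\psi$ and $U_{00}^{*}U_{00}\psi=\psi$) is a genuine small simplification, since it removes the need for the paper's first lemma on unitary restrictions of contractions.
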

\begin{proof}
Recall that the $U_{ij}$'s must *-commute with the $V_{kl}$'s.
The embezzlement relations tell us that
\[ U_{00}V_{00}\psi = \psi/\sqrt{2}, \ \ U_{10}V_{00}\psi =0, \ \ U_{10}V_{10}\psi = \psi/\sqrt{2}, \ \ U_{00}V_{10}\psi
=0.\]

The fact that  $U_A$ and $U_B$ are unitaries implies that
\[ I= V^*_{00}V_{00}+ V^*_{10}V_{10}= U_{00}^*U_{00} + U_{10}^*U_{10} \]
so that
\begin{align*}
U_{00} \psi &= U_{00}(V_{00}^*V_{00} + V_{10}^*V_{10})\psi = V_{00}^*U_{00}V_{00}\psi = V_{00}^*\psi/\sqrt{2}, \\
V_{00}\psi &= V_{00}(U_{00}^*U_{00} +U_{10}^*U_{10})\psi = U_{00}^*\psi/ \sqrt{2}.
\end{align*}
Iterating, yields
\begin{align*}
(V_{00}^*)^n\psi = (\sqrt{2}U_{00})^n \psi \mbox{\ \ and \ \ } V_{00}^n \psi = (U_{00}^*/\sqrt{2})^n \psi.
\end{align*}
Since $(1/\sqrt{2})^n \psi = (U_{00}V_{00})^n\psi = U_{00}^nV_{00}^n \psi =
U_{00}^n(U_{00}^*/\sqrt{2})^n \psi$, we have that $U_{00}^n U_{00}^{*n}\psi =\psi.$
Because $||U_{00}|| \le 1$ we have that $||U_{00}^{*n} \psi|| =1$ for all $n$.

Hence, by the previous lemma, $U_{00}^*$ acts isometrically on $\cl M$.

Now if $U_{00}^*$ acted unitarily, then by the earlier lemma, for vectors in this space $U_{00}$ would be the inverse and in particular would act isometrically on $\cl M$.
But then we would have that $\|V_{00}^*\psi\|= \sqrt{2} \|U_{00}\psi\| = \sqrt{2}$. This is impossible, because $U_B$ is a unitary and so $\|V_{00}^*\| \le 1$.

This contradiction shows that $U_{00}^*$ must be a non-unitary isometry on $\cl M$.
\end{proof}

This yields the following fact. Recall that for a perfect embezzlement protocol, we are only assuming that the operators commute, not that the resource space has a bipartite tensor structure.

\begin{cor} Perfect embezzlement is impossible in the commuting-operator framework if the resource space is finite-dimensional.
\end{cor}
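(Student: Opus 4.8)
The plan is to derive this corollary directly from the preceding theorem, which asserts that in \emph{any} perfect embezzlement protocol the operator $U_{00}^*$ restricts to a \emph{non-unitary} isometry on the invariant subspace $\cl M$ spanned by $\{U_{00}^{*n}\psi : n \ge 0\}$. The only additional ingredient I would use is the elementary fact recorded at the start of this section: a non-unitary isometry cannot exist on a finite-dimensional space, since a simple dimension count shows that every isometry on a finite-dimensional Hilbert space is automatically onto, hence a unitary.

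First I would argue by contradiction, supposing that a perfect embezzlement protocol exists whose resource space $\RR$ is finite-dimensional. Because $\psi \in \RR$ and $U_{00}$ is an operator on $\RR$, the subspace $\cl M$ generated by the orbit $\{U_{00}^{*n}\psi : n \ge 0\}$ is contained in $\RR$ and is therefore itself finite-dimensional. Next I would invoke the preceding theorem to conclude that the restriction of $U_{00}^*$ to $\cl M$ is a non-unitary isometry. But $\cl M$ is finite-dimensional, so by the dimension count above this restriction must in fact be a unitary, contradicting the theorem. This contradiction yields the corollary.

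I do not expect any real obstacle here, since all of the analytic work is already carried by the theorem, and the corollary is a one-line dimensional consequence. The only point I would state carefully is that $\cl M$ inherits finite-dimensionality from $\RR$ as a subspace, so that the ``no non-unitary isometries in finite dimensions'' principle genuinely applies to the operator the theorem produces.
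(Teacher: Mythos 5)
Your proposal is correct and is essentially identical to the paper's own argument: the paper likewise notes that $\cl M \subseteq \RR$ is finite-dimensional, and that every isometry on a finite-dimensional space is a unitary, contradicting the theorem's conclusion that $U_{00}^*$ restricted to $\cl M$ is a non-unitary isometry. No gaps; the careful remark that $\cl M$ inherits finite-dimensionality from $\RR$ is exactly the one-line step the paper relies on.
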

\begin{proof} If $\cl H_R$ is finite dimensional, then $\cl M$ is also finite dimensional. But every isometry on a finite dimensional space is necessarily a unitary, contradicting the fact that $U_{00}^*$ is a non-unitary isometry.
\end{proof}

\section{Acknowledgements}
We would like to thank Marius Junge, Debbie Leung, Volkher Scholz, and John Watrous for helpful discussions. This research was supported in part by Canada's NSERC, a David R. Cheriton Scholarship, and a Mike and Ophelia Lazaridis Fellowship.



\bibliographystyle{amsplain}

\bibliography{InfiniteEntanglement}

%
%
%


\appendix
\section{The Schmidt and polar decompositions in infinite dimensions}\label{appendix:Schmidt}

In this section, for the convenience of the reader, we gather together a few useful results from operator theory that are not well known within the QIT community. We are claiming no originality.

\begin{defn} Let $W: H_1 \to H_2$. Then $W$ is called an {\bf isometry} if $\|Wh_1\|_2 = \|h_1\|_1$ for every $h_1 \in H_1$.  $W$ is called a {\bf coisometry} iff $W^*:H_2 \to H_1$ is an isometry. $W$ is called a {\bf partial isometry} if the restriction of $W$ to $ker(W)^{\perp}$ is an isometry. In this case the space $ker(W)^{\perp}$ is called the {\bf initial space} of $W$ and $ran(W)^- = ker(W^*)^{\perp}$ is called the {\bf final space} of $W$.
\end{defn}

\begin{prop}\label{prop:tensor_basis} Let $H$ and $K$ be Hilbert spaces of arbitrary dimension, let $\{ e_{\alpha}: \alpha \in A \}$ and $\{ f_{\beta} : \beta \in B \}$ be o.n. bases for $H$ and $K$, respectively.  Then $\{ e_{\alpha} \otimes  f_{\beta}: \alpha \in A, \beta \in B \}$ is an o.n. basis for $H \otimes K.$
\end{prop}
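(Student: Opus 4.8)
The plan is to verify directly the two defining properties of an orthonormal basis of a Hilbert space of arbitrary (possibly non-separable) dimension: that the indexed family $\{e_\alpha \otimes f_\beta\}$ is orthonormal, and that its closed linear span is all of $H \otimes K$ (equivalently, that it is a maximal orthonormal set). I would work from the standard construction of $H \otimes K$ as the completion of the algebraic tensor product $H \odot K$ under the inner product determined by $\langle h_1 \otimes k_1, h_2 \otimes k_2\rangle = \langle h_1, h_2\rangle \langle k_1, k_2\rangle$; in particular this gives $\|h \otimes k\| = \|h\|\,\|k\|$ and the density of $H \odot K$ in $H \otimes K$, both of which the argument relies on.

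Orthonormality is immediate from the defining inner product, since $\langle e_\alpha \otimes f_\beta, e_{\alpha'} \otimes f_{\beta'}\rangle = \langle e_\alpha, e_{\alpha'}\rangle \langle f_\beta, f_{\beta'}\rangle = \delta_{\alpha\alpha'}\delta_{\beta\beta'}$. For totality, since the algebraic tensor product is dense it suffices to show that each elementary tensor $h \otimes k$ lies in the closed span of $\{e_\alpha \otimes f_\beta\}$. I would expand $h = \sum_\alpha \langle h, e_\alpha\rangle e_\alpha$ and $k = \sum_\beta \langle k, f_\beta\rangle f_\beta$, which converge unconditionally with only countably many nonzero terms by Bessel's inequality, and for finite subsets $F \subseteq A$, $G \subseteq B$ let $h_F$, $k_G$ be the corresponding partial sums. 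Then $h_F \otimes k_G$ is a finite linear combination of the $e_\alpha \otimes f_\beta$, and the cross-term splitting $h \otimes k - h_F \otimes k_G = (h - h_F)\otimes k + h_F \otimes (k - k_G)$ together with multiplicativity of the norm yields
\[ \|h \otimes k - h_F \otimes k_G\| \le \|h - h_F\|\,\|k\| + \|h\|\,\|k - k_G\| \to 0, \]
using $\|h_F\| \le \|h\|$. Hence every elementary tensor is approximated arbitrarily well, so the closed span contains $H \odot K$ and therefore equals $H \otimes K$.

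This result is genuinely elementary, so there is no substantive obstacle; the one point requiring a little care is the arbitrary, possibly non-separable, dimension, but Bessel's inequality guarantees that each factor's expansion has only countably many nonzero coefficients, so the net convergence over finite subsets is unproblematic and the cross-term estimate goes through verbatim. I expect the cross-term splitting, combined with the identity $\|a \otimes b\| = \|a\|\,\|b\|$, to be the heart of the argument.
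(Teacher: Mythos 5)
Your proof is correct and complete. Note that the paper itself states this proposition without any proof --- it appears in the appendix as one of several standard operator-theoretic facts gathered ``for the convenience of the reader,'' with no argument supplied --- so there is no paper proof to compare against; your argument is the canonical one that would be expected. Both halves check out: orthonormality is immediate from the defining inner product on elementary tensors, and your totality argument (density of the algebraic tensor product, partial sums $h_F \otimes k_G$, the splitting $h \otimes k - h_F \otimes k_G = (h-h_F)\otimes k + h_F \otimes (k-k_G)$, and the bound $\|h_F\| \le \|h\|$ from Bessel) handles the non-separable case correctly, since each vector's expansion has only countably many nonzero coefficients.
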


\begin{prop}[The Polar Decomposition]\label{prop:polar_decomposition} Let $X: H_1 \to H_2$ and let $|X|= (X^*X)^{1/2}.$
Then there is a unique partial isometry $W:H_1 \to H_2$ with initial space $ker(X)^{\perp}=ker(|X|)^{\perp}= ran(|X|)^-$ and final space $ran(X)^-= ker(X^*)^{\perp}$ such that $X= W|X|.$
\end{prop}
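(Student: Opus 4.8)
The plan is to construct $W$ explicitly on the range of $|X|$, extend it by continuity, and pad it with zero on the orthogonal complement; the uniqueness and the descriptions of the initial and final spaces then follow from self-adjointness of $|X|$ and the elementary relation between the range of an operator and the kernel of its adjoint. The one genuinely infinite-dimensional ingredient is the existence of the positive square root $|X| = (X^*X)^{1/2}$, which I would take from the continuous functional calculus applied to the positive operator $X^*X$ (note $X^*X \ge 0$ since $\langle X^*X h, h \rangle = \|Xh\|^2 \ge 0$); once $|X|$ is in hand, the argument parallels the finite-dimensional one.

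First I would record the basic norm identity. Since $|X|$ is self-adjoint with $|X|^2 = X^*X$, for every $h \in H_1$,
\[
\||X|h\|^2 = \langle |X|^2 h, h \rangle = \langle X^*X h, h \rangle = \|Xh\|^2,
\]
so $\||X|h\| = \|Xh\|$ and hence $ker(|X|) = ker(X)$. This identity is the engine of the whole argument.

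Next I would define $W$ on $ran(|X|)$ by $W(|X|h) = Xh$. The norm identity makes this well defined (if $|X|h = |X|h'$ then $X(h-h') = 0$) and isometric, and an isometry extends uniquely and continuously to the closure, giving an isometry $W : ran(|X|)^- \to ran(X)^-$. Declaring $W = 0$ on $\bigl(ran(|X|)^-\bigr)^{\perp}$ produces a partial isometry on all of $H_1$ satisfying $X = W|X|$ by construction. For the space descriptions, self-adjointness of $|X|$ gives $ker(|X|)^{\perp} = ran(|X|)^-$, which together with $ker(|X|) = ker(X)$ identifies the initial space as $ker(X)^{\perp} = ran(|X|)^-$; and the standard identity $ran(X)^{\perp} = ker(X^*)$ gives the final space $ran(X)^- = ker(X^*)^{\perp}$.

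Finally, for uniqueness, suppose $W'$ is any partial isometry with $X = W'|X|$ and initial space $ran(|X|)^-$. Then $W'$ and $W$ agree on $ran(|X|)$, hence on its closure by continuity, and both vanish on the orthogonal complement of the initial space, so $W' = W$. I expect the only subtlety beyond the finite-dimensional case to be the continuous extension of the isometry from $ran(|X|)$ to $ran(|X|)^-$ and the careful bookkeeping of which closures appear; the existence of $|X|$ via functional calculus is the substantive infinite-dimensional input, but it is entirely standard.
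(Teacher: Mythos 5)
Your proposal is correct and follows essentially the same route as the paper, which proves this proposition by the one-line recipe of setting $W(|X|h) = Xh$ and checking it is well defined (via the identity $\||X|h\| = \|Xh\|$) and has the stated properties. Your write-up simply fills in the details of that same construction, including the continuous extension to $ran(|X|)^-$, the zero extension on the complement, and the uniqueness argument.
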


To prove, one simply sets $W(|X|h) = Xh$ and shows that this is well-defined and satisfies the properties.

Note that $W$ is an isometry iff $ker(X) = (0)$ and is a coisometry iff $ran(X)^- = H_2.$

\begin{prop}\label{prop:coisometry_on} Let $\{ e_k \}$ be an o.n. sequence in a Hilbert space, set $u_i = \sum_k u_{i,k} e_k$ and let $U=(u_{i,j})$.  Then $\{ u_i \}$ is o.n. iff $UU^*= I,$ i.e., $U$ is a coisometry.
\end{prop}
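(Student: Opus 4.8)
The plan is to reduce the equivalence to a single application of Parseval's identity for the orthonormal system $\{e_k\}$, after which both implications become a matter of comparing matrix entries. First I would note that, because $\{e_k\}$ is orthonormal and $u_i = \sum_k u_{i,k} e_k$, the coefficients are exactly the Fourier coefficients $u_{i,k} = \langle u_i, e_k \rangle$, so in particular $\sum_k |u_{i,k}|^2 = \|u_i\|^2$ and each row of $U$ lies in $\ell^2$. The heart of the argument is then Parseval applied to the two vectors $u_i, u_j$, both of which lie in the closed span of $\{e_k\}$:
\begin{equation*}
\langle u_i, u_j \rangle = \sum_k \langle u_i, e_k \rangle \langle e_k, u_j \rangle = \sum_k u_{i,k} \overline{u_{j,k}} = (UU^*)_{i,j},
\end{equation*}
where the last equality holds because $(U^*)_{k,j} = \overline{u_{j,k}}$. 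Thus $\langle u_i, u_j \rangle = (UU^*)_{i,j}$ for every $i,j$.

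Granting this identity, orthonormality of $\{u_i\}$ is the statement $\langle u_i, u_j \rangle = \delta_{ij}$ for all $i,j$, which is precisely the statement that every entry of $UU^*$ agrees with the corresponding entry of the identity matrix. This yields both directions once one knows that matching all entries is the same as the operator identity $UU^* = I$; in the direction assuming $UU^* = I$, the operator $U$ is given as bounded and one simply reads off its entries.

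The only genuinely infinite-dimensional subtlety---and the step I expect to require the most care---is the converse: from orthonormality of the $u_i$ I must produce an honest bounded operator satisfying $UU^* = I$, rather than merely an entrywise identity obtained by an unjustified interchange of sums. Here I would argue directly that the assignment $\delta_i \mapsto \sum_k \overline{u_{i,k}}\, \delta_k$ carries the orthonormal basis $\{\delta_i\}$ to an orthonormal family in $\ell^2$ (its inner products are $\overline{\langle u_i, u_j \rangle} = \delta_{ij}$), and hence extends uniquely to a bounded isometry, which is $U^*$. By the definition of coisometry given above, this says exactly that $U = (U^*)^*$ is a coisometry, i.e.\ $UU^* = I$, completing the equivalence without any convergence gymnastics.
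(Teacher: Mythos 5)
Your proof is correct. Note that the paper states this proposition without any proof at all---its appendix explicitly collects standard operator-theoretic facts ``claiming no originality''---so there is no argument of the paper's to compare against; what you have written is the standard argument the paper implicitly relies on. The Parseval computation $\langle u_i, u_j\rangle = \sum_k u_{i,k}\overline{u_{j,k}} = (UU^*)_{i,j}$ settles both directions at the entrywise level, and you correctly identified and closed the one genuine subtlety that an entrywise reading hides: in the forward direction one must first exhibit $U$ (equivalently $U^*$) as an honest bounded operator before the equation $UU^*=I$ even makes sense, which your extension of $\delta_i \mapsto \sum_k \overline{u_{i,k}}\,\delta_k$ from the orthonormal basis to a bounded isometry accomplishes, since $\sum_k |u_{i,k}|^2 = \|u_i\|^2 < \infty$ guarantees these images lie in $\ell^2$ and their pairwise inner products are $\overline{\langle u_i,u_j\rangle} = \delta_{ij}$.
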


\begin{thm}[The Infinite Dimensional Schmidt Decomposition]\label{thm:inf_dim_schmidt} Let $H$ and $K$ be Hilbert spaces of arbitrary dimension and let $x \in H \otimes K$. Then there are countable orthonormal sets  $ u_k  \in H$ and $v_k \in K$ and $d_k \ge 0$ and $d_k \ge d_{k+1}, \forall k$, such that $x= \sum_k d_k u_k \otimes v_k,$ and so $\|x\|^2= \sum_k d_k^2.$

Moreover, if $x = \sum_k c_k w_k \otimes z_k$ is another such representation of $x$, then $c_k=d_k$ for all $k$. 
\end{thm}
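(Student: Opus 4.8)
The plan is to convert the problem of decomposing the vector $x$ into a spectral problem for a compact positive operator, and then read off the Schmidt data from the polar decomposition supplied by Proposition~\ref{prop:polar_decomposition}. First I would fix orthonormal bases $\{e_\alpha\}$ of $H$ and $\{f_\beta\}$ of $K$ and write $x = \sum_{\alpha,\beta} x_{\alpha\beta}\, e_\alpha \otimes f_\beta$, so that $\|x\|^2 = \sum_{\alpha,\beta} |x_{\alpha\beta}|^2 < \infty$. This finiteness lets me associate to $x$ a Hilbert--Schmidt (hence compact) operator $X \colon K \to H$ whose matrix with respect to the two bases is $(x_{\alpha\beta})$, using the conjugate pairing under which the tensor $e_\alpha \otimes f_\beta$ corresponds to the rank-one operator $\langle f_\beta, \cdot\rangle\, e_\alpha$. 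The essential gain is that $X$ is compact, so its singular spectrum is discrete.

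Next I would analyze $|X| = (X^*X)^{1/2}$. Since $X$ is compact, so is $X^*X$, and hence $|X|$ is a compact positive operator on $K$. By the spectral theorem for compact self-adjoint operators, $|X|$ has a countable orthonormal set of eigenvectors $\{v_k\}$ with strictly positive eigenvalues $d_k$, and I would order these so that $d_k \ge d_{k+1}$; the rest of the spectrum is $\{0\}$. Each $v_k$ lies in $\overline{\operatorname{ran}(|X|)}$, since $v_k = |X|(v_k/d_k)$. Applying the polar decomposition $X = W|X|$ of Proposition~\ref{prop:polar_decomposition}, whose partial isometry $W$ has initial space $\overline{\operatorname{ran}(|X|)}$, I set $u_k = W v_k$; because $W$ is isometric on its initial space, $\{u_k\}$ is an orthonormal set in $H$.

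Then I would translate back to the tensor language. From $X v_k = W|X|v_k = d_k u_k$, together with the fact that the complementary part of $K$ is annihilated by $|X|$ and hence by $X$, the operator equals $X = \sum_k d_k\, \langle v_k, \cdot\rangle\, u_k$; unwinding the identification between operators and tensors yields $x = \sum_k d_k\, u_k \otimes v_k$, and $\|x\|^2 = \sum_k d_k^2$ is immediate. For uniqueness, a second representation $x = \sum_k c_k\, w_k \otimes z_k$ with orthonormal $\{w_k\},\{z_k\}$ and $c_k \ge c_{k+1} \ge 0$ corresponds to the operator $\sum_k c_k \langle z_k, \cdot\rangle w_k$, which must coincide with $X$; then $X^*X = \sum_k c_k^2\, \langle z_k, \cdot\rangle z_k$ exhibits the $c_k^2$ as precisely the nonzero eigenvalues of $X^*X$ with multiplicity. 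Since those eigenvalues are determined by $X$ alone and both $(c_k)$ and $(d_k)$ list their positive square roots in decreasing order, $c_k = d_k$ for all $k$.

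The main obstacle I anticipate is bookkeeping rather than conceptual: pinning down the conjugate-linear identification between $H \otimes K$ and the space of Hilbert--Schmidt operators so that the correspondence between $e_\alpha \otimes f_\beta$ and rank-one operators is consistent throughout, and verifying carefully that each $v_k$ genuinely lies in the initial space of $W$, so that $u_k = W v_k$ are orthonormal rather than merely partial-isometry images. The spectral-theoretic core---compactness of $X$ forcing a discrete, summable singular spectrum to which the polar decomposition can be applied---is standard.
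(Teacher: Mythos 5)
Your proof is correct, and its core is the same as the paper's: identify $x$ with a Hilbert--Schmidt (hence compact) operator $X$, apply the polar decomposition of Proposition~\ref{prop:polar_decomposition}, and diagonalize the compact positive operator $|X|$ to read off the Schmidt data. The execution differs in ways worth noting. The paper first restricts to the countable-dimensional span of the vectors $h_j=\sum_i z_{i,j}e_i$ so that $X$ becomes a square matrix with dense range; it then writes a full singular value decomposition $X=UDV$ with $V$ unitary and $U=WV^*$ a coisometry, invoking Proposition~\ref{prop:coisometry_on} to conclude that the columns of $U$ are orthonormal. You instead keep $X\colon K\to H$ as an operator between the two spaces, take the eigenvectors $v_k$ of $|X|$ with positive eigenvalues, note that $v_k=|X|(v_k/d_k)$ lies in the initial space of $W$, and set $u_k=Wv_k$; this bypasses the dense-range/coisometry step and the coordinate bookkeeping entirely. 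Your uniqueness argument is also more self-contained: computing $X^*X=\sum_k c_k^2\langle z_k,\cdot\rangle z_k$ from a second representation exhibits the $c_k^2$ as the nonzero eigenvalues of $X^*X$ with multiplicity, whereas the paper appeals, more loosely, to invariance of singular values under pre- and post-multiplication by unitaries. The one point you must nail down---and you flag it yourself---is that the identification $e_\alpha\otimes f_\beta\mapsto \langle f_\beta,\cdot\rangle e_\alpha$ is conjugate-linear in the second factor, so unwinding $\sum_k d_k\langle v_k,\cdot\rangle u_k$ actually produces $\sum_k d_k\, u_k\otimes \bar v_k$, where $\bar v_k$ is the coordinatewise conjugate of $v_k$ in the fixed basis $\{f_\beta\}$; since conjugation in a fixed orthonormal basis preserves orthonormality, this costs nothing, but it should be stated to make the unwinding step airtight.
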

\begin{proof}
  Pick any orthonormal bases $\{ e_{\alpha} : \alpha \in A \}$ and $\{ f_{\beta}: \beta \in B \}$. By Proposition~\ref{prop:tensor_basis}, we can expand $x = \sum_{\alpha, \beta} z_{\alpha, \beta} e_{\alpha} \otimes f_{\beta}$. We know that only countably many of the coefficients are non-zero, so we only need countably many $\alpha$'s and countably many $\beta$'s. So we can write  $x = \sum_{i,j} z_{i,j} e_i \otimes f_j.$ and $\sum_{i,j} |z_{i,j}|^2 = \|x\|^2$.

Let $h_j = \sum_i z_{i, j} e_i$, so that $x = \sum_j h_j\otimes f_j$, and let 
$H = \text{span}\{h_j\}$. Let $\{\ket i\}$ be an orthonormal basis of $H$, and 
write $h_j = \sum_{i} x_{i, j} \ket i$. This gives us 
$x = \sum_{i, j} x_{i, j}\ket i\otimes f_j$.

Let $X = \sum_{i, j} x_{i, j} \ket i\bra j$ be the matrix of a map from $H$ to $H$. Note that $X$ is Hilbert-Schmidt and so compact and also
has dense range because $\text{span}\{\ket i\}  = \text{span} \{h_j\}= \text{span}\{\sum_{i} x_{i, j} \ket i\}$. 
By Proposition~\ref{prop:polar_decomposition}, performing polar decomposition on $X$ yields $X = W |X|$ where 
$W$ is a partial isometry, and $|X| = (X^* X)^{1/2}$. 

Since $|X|$ is compact and positive, it has an orthonormal basis of eigenvectors. This defines a unitary 
$V$ such that $V |X| V^* = D$, where $D  =\sum_{k} d_k \ket k\bra k$ is a diagonal matrix and the $d_k$'s are the singular values of $X$ arranged in decreasing order.  Conjugating $D$ by $V$, we 
get $|X| = V^* D V$. Combining this with the polar decomposition, we get 
$X = W|X| = WV^* D V = U D V$ where $U = WV^*$ is a partial isometry. Moreover, since $X$ has dense range,  $U$ is a coisometry.

Let $U = \sum_{i, j} u_{i,j} \ket i\bra j$, and $V = \sum_{i, j} v_{i, j}\ket i \bra j$. Then, 
\begin{align*}
\sum_{i,j} |x_{i,j}|^2 &=\Tr(X^*X)= \Tr(V^*D^*U^*UDV) \\
&= \Tr(V^*D^2V) = \Tr(D^2) = \sum_k d_k^2.
\end{align*}
Now let $u_k = \sum_{i} u_{i, k} \ket i$ and $v_k = \sum_j v_{k, j} f_j$. Since $U$ is
a coisometry, by Proposition~\ref{prop:coisometry_on}, $\{u_k\}$ is an orthonormal set. Now we have
\[x = \sum_{i, j} x_{i, j} \ket i \otimes f_j = \sum_{i, j, k} u_{i, k} d_k v_{k, j} \ket i \otimes f_j = \sum_{k} d_k u_k\otimes v_k.\]
The statement about the uniqueness of the sequence $d_k$ follows from the fact that these numbers are the singular values of the Hilbert-Schmidt matrix $X$ and that any other choice of basis for representing $X$ would give rise to a matrix that is obtained from $X$ by pre and post multiplying by unitaries, which does not alter the singular values.  
\end{proof}
\section{A primer on C*-algebras}
\label{appendix:cstarprimer}

For readers unfamiliar with C*-algebras, we briefly mention the definitions and tools that we shall use. For very readable general references we recommend \cite{Da} or \cite{KR}.

Given a Hilbert space $\cl H$ we let $B(\cl H)$ denote the set of bounded linear operators from $\cl H$ to $\cl H$.
By a {\it C*-algebra of operators} we mean a subset $\cl A \subseteq B(\cl H)$ for some Hilbert space $\cl H$ satisfying:
\begin{itemize}
\item $X, Y \in \cl A, \lambda \in \bb C \implies (\lambda X+Y) \in \cl A$ and $XY \in \cl A$,
\item $X \in \cl A \implies X^* \in \cl A$, where $X^*$ denotes the adjoint of the operator $X$(sometimes denoted by $X^{\dag}$ in the physics literature),
\item $\cl A$ is closed in the operator norm, i.e., if $X_n \in \cl A$, $X \in B(\cl H)$ and $\|X_n - X \| \to 0,$ then $X \in \cl A$.
\end{itemize}
The first condition is the definition of what it means to say that $\cl A$ is an {\it algebra over the complex field}. The second condition is that $\cl A$ be invariant under the taking of operator
adjoints and the third is that it be a closed subset of $B(\cl H)$ in a certain topology.

C*-algebras of operators also have an abstract characterization. An algebra $\cl A$ over the complex numbers that is equipped with a norm $\| \cdot \|$ that satisfies $\|xy\| \le \|x\| \cdot \|y\|$ is called a {\it normed algebra}. If a normed algebra is {\it complete}, i.e., if every Cauchy sequence converges, then it is called a {\it Banach algebra}.

Given an algebra $\cl A$ over the complex numbers, a {\it *-map} is a map from $\cl A$ to $\cl A$,
$^*: \cl A \to \cl A$ satisfying $(x+y)^*= x^*+y^*$, $(\lambda x)^* = \overline{\lambda} x^*,$ $(xy)^*= y^*x^*$ and $(x^*)^*=x$.

An algebra equipped with a *-map is called a {\it *-algebra}. A map $\pi: \cl A \to \cl B$ between two algebras that is linear and satisfies $\pi(xy) = \pi(x) \pi(y)$ is called a {\it homomorphism}. If both algebras are also *-algebras and the map satisfies $\pi(x^*) = \pi(x)^*$ then $\pi$ is called a {\it *-homomorphism}.

Finally, an { \it (abstract) C*-algebra} $\cl A$ is a Banach *-algebra that satisfies $\|x^*x\|= \|x\|^2, \, \forall x \in \cl A$.
 Note that $B(\cl H)$ is an abstract C*-algebra and so is every C*-algebra of operators.  The celebrated {\it Gelfand-Naimark-Segal theorem} shows that every abstract C*-algebra is in an appropriate sense a C*-algebra of operators.
 
 \begin{thm}[Gelfand-Naimark-Segal] Let $\cl A$ be an abstract C*-algebra. Then there is a Hilbert space $\cl H$ and a map $\pi: \cl A \to B(\cl H)$ such that:
 \begin{itemize}
 \item $\pi$ is a *-homomorphism,
 \item $\|\pi(x)\|= \|x\|$ for all $x \in \cl A$.
 \end{itemize}
 Moreover, if $\cl A$ has a unit element $1 \in \cl A$ then, in addition, one can arrange that $\pi(1) = I_{\cl H}$.
 \end{thm}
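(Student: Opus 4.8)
The plan is to construct the isometric *-representation as a large direct sum of cyclic representations obtained from states via the GNS construction, after first reducing to the unital case. First I would handle the non-unital case by adjoining a unit: form the unitization $\tilde{\cl A} = \cl A \oplus \bb C$ with product $(a,\lambda)(b,\mu) = (ab+\lambda b+\mu a,\lambda\mu)$ and involution $(a,\lambda)^*=(a^*,\overline\lambda)$, equipped with the operator norm coming from its left action on $\cl A$; one checks this is again a C*-algebra with unit $(0,1)$. Since a faithful isometric representation of $\tilde{\cl A}$ restricts to one of $\cl A$, it suffices to treat the unital case, and I therefore assume $1 \in \cl A$ from now on.

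The basic building block is the GNS construction associated to a \emph{state}, i.e., a linear functional $s$ with $s(1)=1$ and $s(x^*x)\ge 0$ for all $x$. Given $s$, I would define a positive semidefinite sesquilinear form on $\cl A$ by $\langle x,y\rangle_s = s(y^*x)$. Its null space $N_s=\{x:s(x^*x)=0\}$ is a left ideal (using the operator inequality $a^*a\le\|a\|^2 1$), so left multiplication descends to the quotient $\cl A/N_s$; completing this pre-Hilbert space gives a Hilbert space $\cl H_s$. I then define $\pi_s(a)$ by $\pi_s(a)(x+N_s)=ax+N_s$ and let $\xi_s$ be the class of $1$. The routine verifications are that $\pi_s$ is a unital *-homomorphism, that each $\pi_s(a)$ is bounded with $\|\pi_s(a)\|\le\|a\|$, and that $s(a)=\langle\pi_s(a)\xi_s,\xi_s\rangle$ with $\xi_s$ cyclic.

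To obtain a single faithful representation I would take the universal representation $\pi=\bigoplus_s\pi_s$ over all states $s$, acting on $\cl H=\bigoplus_s\cl H_s$. This is a *-homomorphism with $\pi(1)=I_{\cl H}$, so the unital normalization demanded by the theorem comes for free, and $\|\pi(a)\|=\sup_s\|\pi_s(a)\|\le\|a\|$. The reverse inequality is the crux. By the C*-identity it suffices, for a fixed $a$, to produce a single state $s_0$ with $s_0(a^*a)=\|a^*a\|$, since then
\[ \|\pi(a)\|^2 \ge \|\pi_{s_0}(a)\|^2 = \|\pi_{s_0}(a^*a)\| \ge \langle\pi_{s_0}(a^*a)\xi_{s_0},\xi_{s_0}\rangle = s_0(a^*a) = \|a^*a\| = \|a\|^2. \]

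The hard part is constructing this norm-attaining state, and it is here that the underlying spectral and functional-analytic machinery is essential. Writing $h=a^*a$, which is positive, the commutative C*-subalgebra $C^*(h,1)$ is, by the commutative Gelfand--Naimark theorem, isometrically *-isomorphic to $C(\sigma(h))$ with $\sigma(h)\subseteq[0,\infty)$; since the spectral radius of a self-adjoint element equals its norm, the point $t_0=\|h\|$ lies in $\sigma(h)$, and evaluation at $t_0$ is a state on $C^*(h,1)$ sending $h$ to $\|h\|$. I would then extend this functional to all of $\cl A$ by the Hahn--Banach theorem, preserving its norm, and invoke the standard fact that a norm-one functional on a unital C*-algebra with $s(1)=1$ is automatically positive, hence a state. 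Feeding this $s_0$ into the inequality above gives $\|\pi(a)\|=\|a\|$ and completes the proof. The main obstacle is thus not the GNS construction itself but assembling enough states to recover the norm, which rests on commutative Gelfand theory, the spectral radius formula, and positivity of norming functionals.
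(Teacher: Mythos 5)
Your proof is correct. Note that the paper itself offers no proof of this theorem: it appears in the C*-algebra primer of the appendix as a quoted ``celebrated'' result, with the reader deferred to the standard references cited there. Your argument is precisely the classical one found in those references: reduce to the unital case by unitization, build cyclic representations via the GNS construction, take the universal representation $\pi = \bigoplus_s \pi_s$ over all states, and recover the isometry from the C*-identity by manufacturing, for each $a$, a norm-attaining state $s_0$ with $s_0(a^*a) = \|a^*a\|$ using commutative Gelfand theory, the spectral radius formula for self-adjoint elements, Hahn--Banach extension, and automatic positivity of unital norm-one functionals. Two caveats, neither a gap at the level of generality you are working at, but worth flagging: first, the assertion that $h = a^*a$ is positive, i.e.\ $\sigma(a^*a) \subseteq [0,\infty)$, is itself a nontrivial theorem in an abstract C*-algebra (historically due to Fukamiya and Kelley--Vaught; the original Gelfand--Naimark paper assumed it as an extra axiom), and the same fact underlies your inequality $x^*a^*ax \le \|a\|^2\, x^*x$ in the GNS step, so it deserves explicit acknowledgment as an input rather than a triviality. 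Second, the left-multiplication norm you place on $\tilde{\cl A} = \cl A \oplus \bb C$ is degenerate when $\cl A$ already possesses a unit $e$, since $(-e,1)$ then acts as zero on $\cl A$; your reduction is nevertheless sound because you invoke the unitization only in the non-unital case, where faithfulness and isometry of the left regular action on $\cl A$ follow from the C*-identity $\|a\| = \sup_{\|b\| \le 1} \|ab\|$.
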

  
  A  map satisfying the second condition is called an {\it isometry}. Clearly, an isometry is one-to-one. Conversely, it is a theorem that every one-to-one *-homomorphism is an isometry. A one-to-one, onto *-homomorphism is called a {\it *-isomorphism}.

  The two conditions in the above theorem also guarantee that the range of $\pi,$ $\cl B= \pi(\cl A)$ is a C*-algebra of operators. Thus, $\pi$ is a *-isomorphism from the abstract C*-algebra onto a C*-algebra of operators.
  
  A key element of the proof of the above theorem is their theorem on representations of states.  A {\it state} on an abstract unital C*-algebra $\cl A$ is any linear functional,
  $s: \cl A \to \bb C$ such that $s(1) =1$ and $s(x^*x) \ge 0$ for every $x \in \cl A$.
  
  \begin{thm}[GNS state representation theorem] Let $s:\cl A \to \bb C$ be a state on a unital C*-algebra.  Then there exists a Hilbert space $\cl H_s$, a unit vector $\xi \in \cl H_s$ and a unital *-homomorphism, $\pi_s: \cl A \to B(\cl H_s)$ such that $s(x) = \langle \xi | \pi_s(x) \xi \rangle$ for all $x \in \cl A$ and such that the subspace of vectors of the form $\{ \pi_s(x) \xi : x \in \cl A \}$ is dense in $\cl H_s$.
  \end{thm}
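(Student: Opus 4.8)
The plan is to carry out the classical Gelfand--Naimark--Segal construction, building the Hilbert space, representation, and cyclic vector directly out of the algebraic data of the state $s$. The starting observation is that $s$ induces a positive semidefinite sesquilinear form on $\cl A$ itself: define $\langle a \mid b \rangle_0 = s(a^*b)$ for $a, b \in \cl A$. Conjugate-linearity in the first slot and linearity in the second follow from the $*$-operation and linearity of $s$, while positivity $\langle a \mid a \rangle_0 = s(a^*a) \ge 0$ is exactly the defining property of a state. This form may be degenerate, so I would set $N = \{ a \in \cl A : s(a^*a) = 0 \}$ and pass to the quotient $\cl A / N$, on which $\langle \cdot \mid \cdot \rangle_0$ descends to a genuine inner product. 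Completing $\cl A/N$ in the induced norm yields the Hilbert space $\cl H_s$.

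The key structural fact I would establish next is that $N$ is a \emph{left ideal}. For $c \in \cl A$ and $a \in N$, the standard C*-algebra inequalities $0 \le c^*c \le \|c\|^2 \, 1$ combine with the order-preserving map $x \mapsto a^*xa$ to give $0 \le a^*c^*ca \le \|c\|^2 a^*a$; applying the positive functional $s$ forces $0 \le s\bigl((ca)^*(ca)\bigr) \le \|c\|^2 s(a^*a) = 0$, so $ca \in N$. This same inequality is what makes left multiplication well defined and bounded on the quotient: I would define $\pi_s(c)(a + N) = ca + N$, noting that the left-ideal property shows independence of the coset representative, and the estimate $\|\pi_s(c)(a+N)\|^2 = s(a^*c^*ca) \le \|c\|^2 \, \|a+N\|^2$ (using the C*-identity $\|c^*c\| = \|c\|^2$) shows $\pi_s(c)$ is bounded with $\|\pi_s(c)\| \le \|c\|$, hence extends uniquely by continuity to an operator on $\cl H_s$.

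It then remains to verify the algebraic properties and exhibit the cyclic vector. That $\pi_s$ is a unital homomorphism is immediate from associativity, $\pi_s(cd)(a+N) = cda + N = \pi_s(c)\pi_s(d)(a+N)$ and $\pi_s(1)(a+N) = a+N$; the $*$-property follows from $\langle \pi_s(c)(a+N) \mid b+N \rangle = s(a^*c^*b) = \langle a+N \mid \pi_s(c^*)(b+N) \rangle$ on the dense subspace $\cl A/N$, which extends to all of $\cl H_s$ by continuity. Finally I would take $\xi = 1 + N$; then $\|\xi\|^2 = s(1) = 1$, the identity $\pi_s(a)\xi = a + N$ shows that $\{ \pi_s(a)\xi : a \in \cl A \}$ is exactly $\cl A/N$ and hence dense in $\cl H_s$, and $\langle \xi \mid \pi_s(a)\xi \rangle = s(1^*a) = s(a)$ gives the desired representation.

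I expect the main obstacle to be the boundedness of $\pi_s$, which is the one step where the full C*-algebra structure (as opposed to mere $*$-algebra structure) is essential: it hinges on the operator inequality $c^*c \le \|c\|^2 \, 1$ together with positivity of $s$ on positive elements, and it is precisely what licenses the passage from the dense subspace $\cl A/N$ to its completion $\cl H_s$.
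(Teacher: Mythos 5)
Your construction is the standard Gelfand--Naimark--Segal argument and it is correct; note that the paper itself states this theorem in its C*-algebra primer without proof, deferring to textbook references, so there is no in-paper argument for you to diverge from. The only steps you leave tacit --- that $N$ is a linear subspace and that the form descends unambiguously to $\cl A/N$ (both via the Cauchy--Schwarz inequality for the positive semidefinite form $\langle a \mid b \rangle_0 = s(a^*b)$), and that the inequality $c^*c \le \|c\|^2\, 1$ rests on the existence of the positive square root $(\|c\|^2\,1 - c^*c)^{1/2}$ via the continuous functional calculus --- are routine, and you correctly place the essential weight of the C*-structure on the boundedness estimate $\|\pi_s(c)\| \le \|c\|$.
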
 
  
  Thus, the theorem says that for each abstract state there is a way to realize the C*-algebra as a C*-algebra of operators such that the state becomes a vector state. 
  
  \subsection{The State Space}
  Given a unital C*-algebra $\cl A$, the set of all states on $\cl A$, denoted $S(\cl A)$ is a convex set. Moreover, it is endowed with a topology, called the {\it weak*-topology} and in this topology it is a compact set. A net of states $\{ s_{\lambda} \}$ converges to a state $s$ in this topology if and only if $\lim_{\lambda} | s_{\lambda}(a) - s(a)| =0$ for every $a \in \cl A$.


\subsection{Tensor Products of C*-algebras}

Let $\cl A$ and $\cl B$ be two unital C*-algebras, and let $\cl A\otimes \cl B$ be their algebraic tensor product. Given $x= \sum_i a_i \otimes b_i$ and $y = \sum_j c_j \otimes d_j$ in $\cl A \otimes \cl B$ we define their product by
\[ xy = \sum_{i,j} a_i c_j \otimes b_i \otimes d_j,\]
and a *-map by
\[x^* = \sum_i a_i^* \otimes b_i^*.\]
Endowed with these two operations, $\cl A \otimes \cl B$ becomes a *-algebra.

Note that the *-subalgebra $\{ a \otimes 1: a \in \cl A \}$ can be identified with $\cl A$ and similarly, $\{ 1 \otimes b: b \in \cl B \}$ can be identified with $\cl B$. Also $(a \otimes 1)(1 \otimes b) = a \otimes b = (1 \otimes b)(1 \otimes a)$ so that these ``copies" of $\cl A$ and $\cl B$ commute.

There are two important ways to give this *-algebra a norm so that it can be completed to become a C*-algebra. 

Given $x \in \cl A \otimes \cl B$ we set
\[\|x\|_{\max} = \sup \{ \| \pi(x) \|: \ \pi: \cl A \otimes \cl B \to B(\cl H) \text{ is a unital *-homomorphism}\}, \]
where the supremum is taken over all Hilbert spaces $\cl H$ and all unital *-homomorphisms.
The completion of $\cl A \otimes \cl B$ in this norm is a C*-algebra denoted $\cl A \otimes_{\max} \cl B$.

Alternatively, if $\pi_1: \cl A \to B(\cl H_1)$ and $\pi_2: \cl B \to B(\cl H_2)$ are unital *-homomorphisms, then setting $\pi(a \otimes b) = \pi_1(a) \otimes \pi_2(b) \in B(\cl H_1 \otimes \cl H_2)$ and extending linearly, defines a unital *-homomorphism from $\cl A \otimes \cl B$ into $B(\cl H_1 \otimes \cl H_2)$ denoted by $\pi= \pi_1 \otimes \pi_2$.

Given $x \in \cl A \otimes \cl B$ we set
\begin{multline*} \|x\|_{\min} = \sup \{ \| \pi_1 \otimes \pi_2(x) \|: \,\, \pi_1: \cl A \to B(\cl H_1), \, \, \pi_2: \cl B \to B(\cl H_2) \\ \text{ are unital *-homomorphisms}\}.\end{multline*}
The completion of $\cl A \otimes \cl B$ in this norm is a C*-algebra denoted $\cl A \otimes_{\min} \cl B$.

\end{document}